\definecolor{hotmagenta}{rgb}{1.0, 0.11, 0.81}
\renewcommand\@biblabel[1]{#1} 
\newcommand{\profileset}{\mathbf{{A}}} 
\newcommand{\profile}{\mathcal{A}} 
\newcommand{\plausible}{\emph{plausible}\xspace}
\newcommand{\SWMProb}{\textsc{SWM-Prob}${(W,p)}$\xspace}
\newcommand{\isPossSWM}{\textsc{IsPossSWM}\xspace}
\newcommand{\isNecSWM}{\textsc{SWM-Prob}${(W,1)}$\xspace}
\newcommand{\existsNecSWM}{\textsc{ExistsSWM-Prob}$(1)$\xspace}
\newcommand{\SWDist}{\textsc{SW-Dist}\xspace}
\newcommand{\maxSWM}{\textsc{ExistsSWM-Prob}$(p)$\xspace}
\newcommand{\ExistsSWMProb}{\textsc{ExistsSWM-Prob}\xspace}
\newcommand{\maxExpSW}{\textsc{MaxExpSW}\xspace}
\newcommand{\Prob}{\xspace\mathrm{Pr}}
\newcommand{\SWM}{\operatorname{SWM}}
\newcommand{\SW}{\operatorname{SW}}
\newcommand{\E}{\mathbb{E}}
\newcommand{\AS}{\operatorname{AS}}
\newcommand{\DP}{\operatorname{dp}}
\begin{document}
	\title{Social Welfare Maximization\\ in Approval-Based Committee Voting under Uncertainty}
	%
	%
	\author{Haris Aziz\inst{1} \and
		Yuhang Guo\inst{1} \and
		Venkateswara Rao Kagita\inst{2} \and Baharak Rastegari\inst{3} \and Mashbat Suzuki\inst{1}}
	\authorrunning{H. Aziz et al.}
	%
	\institute{University of New South Wales, Sydney, Australia \\ \email{\{haris.aziz,yuhang.guo2,mashbat.suzuki\}@unsw.edu.au}\and
		National Institute of Technology, Warangal, India \\
		\email{venkat.kagita@nitw.ac.in} \and
		University of Southampton, Southampton, UK \\
		\email{b.rastegari@soton.ac.uk}}
	\maketitle              
	\begin{abstract}
Approval voting is widely used for making multi-winner voting decisions. The canonical rule (also called Approval Voting) used in the setting aims to maximize social welfare by selecting candidates with the highest number of approvals. We revisit approval-based multi-winner voting in scenarios where the information regarding the voters' preferences is uncertain. We present several algorithmic results for problems related to social welfare maximization under uncertainty, including computing the social welfare probability distribution of a given outcome, computing the probability that a given outcome is social welfare maximizing, computing an outcome that is social welfare maximizing with the highest probability, and understanding how robust an outcome is with respect to social welfare maximization.
		\keywords{Committee Voting \and Uncertain Preference \and Social Welfare Maximization}
	\end{abstract}
	%
	%
	%
	

\section{Introduction}

Approval voting is one of the simplest and most widely used methods of making selection decisions. 
Due to its fundamental nature, it has found applications in recommender system \citep{CPG+19a,GaFa22a,SLB+17a}, blockchains \citep{BCC+20a,BBC+24a}, and Q \& A platforms \citep{IsBr24a}.
In approval voting, voters are asked to identify the candidates they approve of from a given set. 
The candidates with the highest number of approvals are then selected. Therefore, ``approval voting'' not only specifies the format of the ballots but also commonly points to the method for selecting the candidates~\citep{KILG10a,KiMa12a}. Many organizations and societies use approval voting to select committees. For example, the Institute of Electrical and Electronics Engineers (IEEE), one of the largest scientific and technical organizations, has been using approval voting for selection decisions. 

If approvals of voters are interpreted as voters' binary preferences over candidates, then the outcome of the approval voting method has a clear \textit{utilitarian social welfare} perspective: identify the set of candidates that provide the highest social welfare to the voters. We explore this utilitarian social welfare perspective when there is uncertain information regarding voters' preferences. Uncertain approval preferences are useful when the central planner only has imprecise information about the voters' preferences. This estimated information could be based on historical preferences, past selections, or online clicks or views. For example, if an agent $i$ has selected a certain candidate $c$ 70\% of the time in previous situations, one could use this information to assume that the approval probability of agent $i$ for candidate $c$ is 0.7. The uncertain information could also be based on situations where each agent represents a group of people who may not have identical approval preferences. For example, if 60\% of the group represented by agent $i$ approved a certain candidate $c$, one could assume that the approval probability of agent $i$ for candidate $c$ is 0.6. Uncertainty becomes a prevalent factor also when employing methods such as machine learning or recommendation techniques to forecast the unobserved (dis)approvals of voters for candidates. (The motivating examples are by \citet{AKR+24a}.)

We consider four different types of uncertain approval preferences that have been studied in recent work (see, e.g., \citep{AKR+24a}). 
Under the \emph{Candidate Probability model}, there is a probability for a given voter approving a given candidate. This model captures the examples in the previous paragraph. The Three Valued Approval (3VA) model ~\citep{IIBK22a} is a restricted version of the candidate probability model and captures a natural form of uncertainty where a voter has no or too little information on some candidates and assigns approval probability of  $0.5$ to such candidates.
Under the \emph{Lottery model}, each voter has an independent probability distribution over approval sets. Under the \emph{Joint Probability model}, there is a probability distribution of approval profiles. 
These last two models allow us to capture richer forms of uncertainty where there could be dependencies between candidates and between voters' approval sets. The Joint Probability model, in particular, may not seem practical; we include it for completeness.

In classical approval-based committee voting, the social welfare of any given committee can be computed exactly, making it straightforward to evaluate how ``good" the committee is. However, in scenarios involving uncertain preferences, 
the social welfare of the given committee becomes a random variable and a natural, fundamental question arises: What is the probability that a given committee achieves a sufficiently high level of social welfare? More generally, what is the distribution of the social welfare associated with the committee? Understanding the distribution is essential, as it captures the likelihood of the given committee achieving desirable social welfare.
With social welfare as the objective, the optimal committee is naturally the one that maximizes social welfare. Under deterministic preferences, this problem is well-understood. For any given plausible approval profile, the Approval Voting (AV) rule efficiently identifies the committee that maximizes social welfare in polynomial time. However, in scenarios under uncertain preference models, for example, the Lottery model or Candidate Probability model, a desirable committee in this context is not just one that performs well in a single realization, but one that maximizes social welfare in the largest fraction of all possible realizations, that is, the committee with the highest probability of being social welfare maximizing. When facing uncertainty, another intriguing and practically relevant question concerns the robustness of a committee. For example, under uncertain preferences, does there exist a committee that achieves at least half of the optimal social welfare with high probability? 
This shift from deterministic to probabilistic evaluation introduces significant computational challenges. Since the number of plausible approval profiles grows exponentially with the number of agents and candidates, problems related to social welfare maximizing committees are computationally challenging. 
This leads to compelling questions:
\begin{quote}
	\textit{
		How can we compute the distribution of a given committee's social welfare?
		Can we compute the probability of a given committee being social welfare maximizing in polynomial time?
		Can we efficiently identify the committee that maximizes social welfare (or well-approximates optimal social welfare) with the highest probability?}
\end{quote}

\subsection{Our Results}

Our first contribution is polynomial-time algorithms for computing the distribution of a given committee's social welfare (\SWDist). We design algorithms both for  Lottery and Candidate probability models.
These results allow us to check whether a committee achieves high level of social welfare with high probability.

We next turn to the problem of computing the probability that a given committee is social welfare maximizing (\SWMProb).  
We show that under the Candidate Probability model, the problem is solvable in polynomial time.
In contrast, we prove that the problem is NP-complete under the Lottery model. 
We also present a positive result for the decision variant that asks whether a given committee is social welfare maximizing with probability one, showing that it is solvable in polynomial time.

Next, we explore the problem whether there exists a committee whose probability of being social welfare maximizing exceeds a given threshold $p$ (\maxSWM). 
We show that this problem is NP-hard under the Lottery model. Nevertheless, we provide a polynomial-time result for the special case where $p = 1$.
For the Candidate Probability model, the problem becomes intriguing as even for the  more restrictive 3-Valued Approval (3VA) setting, there is no known characterization of committees that maximize the probability of being social welfare maximizing. While the complexity of the problem remains open, we present positive results under certain constraints, for instance, when the committee size $k$ is constant, or when $p = 1$.

We finally consider the problem of robust welfare maximization: does there exist a committee which guarantees a fraction of the optimal social welfare of the realized profile with high probability? We provide a positive result for the 3VA model and an impossibility result for the Candidate Probability model. Our key results are summarized in Table~\ref{table:summary:uncertainABC-SW}. Missing proofs are relegated to the appendix.  
\begin{table}[ht]
	\centering
		\begin{tabular}{ccc}
			\toprule
			\textbf{Problems} &\textbf{Lottery Model }&\textbf{Candidate Probability Model}  \\
			\midrule
			\SWDist &  in P (Thm.~\ref{thm:SWDist-lot}) & in P (Thm.~\ref{thm:SWDist-CP})  \\
			\midrule 
			\multirow{2}{*}{\SWMProb} & NP-h (Thm.~\ref{SWM-Prob-Sharp-P-c}) &\multirow{2}{*}{in P (Thm.~\ref{thm:MaxSW-Prob-CP-3VA})} \\   
			& (in P (Thm.~\ref{thm:IsNecSWM-Lot})  when $p=1$)  & \\
			\midrule                        
			\multirow{2}{*}{\maxSWM} &  NP-h (Thm.~\ref{th:MaxSWM-Lottery})   & \multirow{2}{*}{$?$}    \\                
			& (in P (Thm.~\ref{thm:ExistsNecSWM-Lot})  when $p=1$) &   \\ 
			\bottomrule
	\end{tabular}
	\caption{Summary of results.}
	\label{table:summary:uncertainABC-SW}
\end{table}

\subsection{Related Work}
Approval-Based Committee (ABC) voting has received considerable attention in recent years (see, e.g., \citep{KiMa12a,ABC+17a,BFJ+17a,AEH+18a,LaSk23a}), primarily focusing on selecting “proportional” committees. 
Concurrently, there has been growing interest in preference aggregation under \emph{uncertainty}. \citet{KoLa05a} study winner determination problems with incomplete preferences for ranking-based single-winner voting rules.  \citet{HAK+12a} explore the probability of a particular candidate winning an election under uncertain preferences for various voting rules, such as the Plurality rule and the Borda rule. \citet{BORO16a} provide a survey on uncertainty and communication in voting. \citet{DHL+22a} investigate the dynamic selection of candidates, where uncertainty is related to the order in which candidates appear. \citet{HKPTW23} devise different query algorithms for scenarios where voters’ ballots are partial and incomplete over all candidates, while \citet{BDI+23a} propose an ABC voting model with possibly unavailable candidates and examine voting rules which admit “safe” query policies to check candidates’ availability. 

Highly relevant to our paper, \citet{BLMR13} examine single and multi-winner voting under uncertain approvals, and \citet{TKO21a} address the problem of checking whether an incomplete approval profile admits a completion within a certain restricted domain of approval preferences. \citet{IIBK22a} study several computational problems, including checking whether a given committee is possibly or necessarily \textit{Justified Representation} (JR) or whether there is a possible outcome of various rules, including Approval Voting. Of the uncertain preference models that we consider, \citet{IIBK22a} have explored the 3VA model. Most of the computational problems that we consider are not studied by \citet{IIBK22a}. They presented a polynomial-time algorithm to check whether given a committee is social welfare maximizing under some realization of the uncertain preferences. This result is implied by our more general results.
Recently, \citet{AKR+24a} consider several probabilistic preference models and problems, such as maximizing the probability of satisfying JR. In our paper, we consider the fundamental objective of maximizing social welfare under the same uncertain preference models.

Uncertain preferences have also been studied in other domains, such as matching and fair allocations. \citet{ABH+19a} investigate the computational complexity of Pareto optimal allocation under uncertain preferences. \citet{RSS24} explore the envy-free allocation for the house allocation market. \citet{ABG+20a} address the problem of computing stable matchings with uncertain preferences. Additionally, \citet{BDE+24a} examine stable matchings under one-sided uncertainty, focusing on computational issues within three different competitive query models.   


\section{Preliminaries}
For any~$t \in \mathbb{N}$, let $[t] \coloneqq \{1, 2, \dots, t\}$.
An \emph{instance} of the (deterministic) approval-based committee (ABC) voting is represented as a tuple $(V, C, \profile, k)$, where:
\begin{itemize}
	\item $V = [n]$ and $C = [m]$ are the sets of \emph{voters} and \emph{candidates}, respectively.
	
	\item Each voter approves a set of candidates $A_i \subseteq C$. Let $\profile = (A_1, A_2, \dots, A_n)$ denote voters' approval profile.
	The set of all possible approval profiles is denoted by~$\profileset$.
	
	\item $k$ is a positive integer that represents the committee size.
\end{itemize}
A \emph{feasible winning committee}~$W \subseteq C$ is of size~$k$. In this paper, we consider only feasible winning committees unless otherwise specified. Given a committee $W$, the welfare of each voter $i$ is the number of candidates in $W$ of whom $i$ approves. The \emph{Social Welfare} ($\SW$) of $W$ given the approval profile $\profile$ is defined as the sum of the welfare of the voters, $\SW(W, \profile) = \sum_{i\in V} |W\cap A_i|$.
Given an approval profile $\profile = (A_1, A_2, \dots, A_n)$, for each candidate $c\in C$, we denote the approval score of candidate $c$ by $\AS(c, \profile) = |\{i \in V :   c \in A_i\}|$.
A committee $W$ is \emph{Social Welfare Maximizing} ($\SWM$) under approval profile $\profile$~if $W$ generates the maximum social welfare among all committees of size $|W|=k$. Given an approval profile $\profile$, we can compute an $\SWM$ committee in polynomial time by computing each candidate's approval score and selecting the $k$ candidates with the highest approval scores in a greedy manner. Consequently, given a deterministic approval profile $\profile$ and a committee $W$, we can decide in polynomial time whether or not $W$ is $\SWM$.

\subsection{Uncertain Preference Models}\label{sec:uncertain-models}
\label{ssec:uncertain-preference-models}
Our main focus is on ABC voting under \emph{uncertain} approval ballots, where there is an underlying probability distribution over approval profiles. We adopt the following uncertainty models considered by \citet{AKR+24a}.
\begin{enumerate}
	\item \textbf{Joint Probability model:}
	A probability distribution~$\Delta(\profileset) \coloneqq \{(\lambda_r, \profile_r)\}_{r \in [s]}$ is given over~$s$ possible approval profiles with $\sum_{r \in [s]} \lambda_r = 1$, where for each~$r \in [s]$, the approval profile~$\profile_r$ is associated with a positive probability~$\lambda_r>0$. {We write $\Delta(\profile_r) = \lambda_r$.}
	
	\item \textbf{Lottery model:} For each voter $i\in V$, we are given a probability distribution~$\Delta_i \coloneqq \{(\lambda_r, S_r)\}_{r \in [s_i]}$ over~$s_i$ approval sets with $\sum_{r \in [s_i]} \lambda_r = 1$.  For each~$r \in [s_i]$, voter $i$ approves (exactly) candidate set~$S_r\subseteq C$  with probability~$\lambda_r>0$. {We write $\Delta_i(S_r) = \lambda_r$.} 
	We assume that the probability distributions of voters are independent.
	
	\item \textbf{Candidate Probability model:} Each voter $i$ approves each candidate~$c$ 
	\textit{independently} with probability~$p_{i, c}$, i.e., for each~$i \in V$ and each~$c \in C$, $p_{i, c} \in [0, 1]$. The \textbf{Three Valued Approval (3VA) model}, is a special case where each agent specifies a subset of candidates that are approved and a subset of candidates that are disapproved, and the remaining candidates could be approved or disapproved independently with equal probability. Specifically, $\forall~ i\in V, c\in C$, $p_{i,c}\in\{0, \frac{1}{2}, 1\}$, wherein $0$ denotes disapproval, $1$ indicates approval, and $\frac{1}{2}$ represents unknown.
\end{enumerate}

We use the notation $\Delta := (\Delta_1, \Delta_2, \dots, \Delta_n)$ to represent the input of the uncertain approval model, which may follow either the Lottery model or the Candidate Probability model, depending on the context of the problem or algorithm under consideration. Under the Lottery model, each $\Delta_i \coloneqq {(\lambda_r, S_r)}_{r \in [s_i]}$ specifies a probability distribution over $s_i$ approval sets for agent $i$. For the Candidate Probability model, each element $\Delta_i \coloneqq (p_{i,1}, p_{i,2},\dots, p_{i,m})$ represents agent $i$’s independent approval probabilities over the $m$ candidates.

The Joint Probability and Lottery models have been studied in other contexts including two-sided stable matching problems and assignment problems~\citep{ABH+19a,ABG+20a}. The 3VA model has been studied in ABC voting~\citep{IIBK22a}. 
We refer to an approval profile that occurs with positive probability, under any of the uncertainty models, as a \textbf{\plausible} profile. The following facts about these uncertainty models were recently pointed out by \citet{AKR+24a}.

\begin{proposition}[\citet{AKR+24a}]\label{unique_joint_prob_for_lottery}
	There is a unique Joint Probability model representation for preferences given under the Lottery model; There is a unique Lottery model representation for preferences given in Candidate-Probability model.
\end{proposition}

\subsection{Computational Problems}\label{sec:problems}
We begin by considering the most natural \maxExpSW problem, which computes a committee that maximizes the expected social welfare ($\E[\SW(W)]$). Formally, $\E[\SW(W)]$ of a committee $W$ is defined as 
$\E[\SW(W)]= \sum_{\profile \in \profileset} \Delta(\profile) \cdot \SW(W, \profile)= \sum_{\profile \in \profileset} \Delta(\profile) \cdot \sum_{i\in V}|W\cap A_i|$. We show that it is polynomial-time solvable\footnote{Here and throughout, “polynomial-time solvable” means polynomial in the size of the input.} under any of the four studied uncertain models and defer the proofs to the appendix. After this,
we study a fundamental computational problem: given a committee $W$, how can we compute the distribution of its social welfare under an uncertain approval model? We term this problem \SWDist. Notice that for any fixed committee $W$, its social welfare is a discrete random variable that takes integer values in the range $[0,kn]$. Formally, the \SWDist problem entails computing the probability $\Prob[\SW(W) = \tau]$ for each $\tau \in [0,kn]$.

We next consider the problem of computing the exact probability that a given committee $W$ is social welfare maximizing ($\SWM$) under uncertain preferences. We formalize this as a decision problem termed \SWMProb, which is defined as follows: given a committee $W$ and a threshold $p\in [0,1]$, determine whether $W$ has at least probability $p$ of being social welfare maximizing. 
\begin{tcolorbox}[title=\SWMProb,colbacktitle=gray!20!white,coltitle=black]
	\textbf{Input}: Voters $V$, Uncertain Approval Model $\Delta$, Committee $W$, threshold $p$; \\
	\textbf{Question}: Decide whether $\Prob[ W \text{ is } \SWM  ] \geq p$. 
\end{tcolorbox}

Beyond computing the probability for a fixed committee being social welfare maximizing, a more challenging problem we investigate in this paper is how to identify a committee that has the highest probability of being social welfare maximizing. We formally formulate this as a decision problem, denoted by \maxSWM: given an uncertain approval model and a threshold $p\in [0, 1]$, decide whether there exists a committee $W$ such that the probability of $W$ being social welfare maximizing is at least $p$.
\begin{tcolorbox}[title=\maxSWM,colbacktitle=gray!20!white,coltitle=black]
	\textbf{Input}: Voters $V$, Uncertain Approval Model $\Delta$, threshold $p$; \\
	\textbf{Question}: Decide whether there exists a committee $W$ such that $\Prob[ W \text{ is } \SWM  ] \geq p$. 
\end{tcolorbox}

The rest of the paper is organized as follows. In \Cref{sec:dist} we study the \SWDist problem under the Lottery and Candidate Probability models and propose two distinct dynamic programming algorithms. With this fundamental tool in hand, we examine the \SWMProb problem in \Cref{sec:probswm} and provide hardness results under the Lottery model and polynomial-time results under the Candidate Probability model. In \Cref{sec:maxswm}, we study the \maxSWM problem.
Finally, we define robust committees 
and discuss the existence and computation of a robust committee under the Candidate Probability model in \Cref{sec:robust}.
	

\section{Computation of Social Welfare Distribution}\label{sec:dist}
We begin by examining the fundamental \SWDist problem:
Given a committee $W$, what is the distribution of its social welfare under uncertain preference models? As previously discussed, the social welfare $\SW(W)$ is a discrete random variable that takes integer values in the range $[0, kn]$. Hence, we can focus on the key computational problem: given a committee $W$ and an integer $\tau \in [0, kn]$, compute the probability $\Prob[\SW(W) = \tau]$. By solving this for all values of $\tau$ in the range $[0, kn]$, we can fully characterize the distribution of $\SW(W)$.
Under the Lottery and Candidate Probability models, we propose two distinct dynamic programming algorithms, each solving the \SWDist problem in polynomial time.

\begin{theorem}\label{thm:SWDist-lot} 
	Under the Lottery model, \SWDist is solvable in polynomial time.
\end{theorem}

\begin{proof}
	Given a committee $W$ and a deterministic approval profile $\profile$, $\SW(W)$ is the sum of $|W \cap A_i|$ for each voter $i$. Under the Lottery model, each voter's approval set is random and hence the value of $|W \cap A_i|$ is a random variable, denoted by $f_i(W)$. Then, we have $\Prob\left[\SW(W) = \tau \right] = \Prob [\sum_{i\in [n]} f_i(W) = \tau]$.
	Conditioning on $f_n(W)$, it can be reformulated as $\sum_{r=0}^{\tau}\Prob\Big[\sum_{i\in [n-1]}f_i(W)=r\mid f_n(W)=(\tau - r)\Big] \cdot \Prob [f_n(W)=(\tau - r)].$
	Note that for any two voters $i$ and $j$, $f_i(W)$ is independent of $f_j(W)$ as each voter's approval set is sampled independently. Therefore  
	\begin{equation}\label{dp_eq_lottery}
		\begin{aligned}
			\Prob\left[\SW(W) = \tau\right] &= \Prob \left[\sum_{i\in [n]} f_i(W) = \tau\right] \\
			& =\sum_{r=0}^{\tau}\Prob\left[\sum_{i\in [n-1]}f_i(W)=r\right] \cdot \Prob[f_n(W)=(\tau-r)].
		\end{aligned}
	\end{equation}
	We design \Cref{dp_algo_sw_w_equal_t_lottery_model} by leveraging \Cref{dp_eq_lottery}.
	\begin{algorithm}[!htbp]
		\caption{\SWDist algorithm for the Lottery model}
		\label{dp_algo_sw_w_equal_t_lottery_model}
		\begin{algorithmic}[1]
			\REQUIRE $W$, $k$, $\tau$, $\Delta$  \\
			\ENSURE $\DP[n][\tau]$
			\STATE Initialize an $n\cdot (k+1)$ matrix $f$ with elements $0$;
			\FOR{each voter $i\in V$}
			\FOR{integer $j=(0,\ldots,k)$}
			\STATE $f[i][j] \leftarrow \sum_{r\in [s_i]}\lambda_r \cdot \mathbb{I}[| W\cap S_r |=j]$;
			\ENDFOR
			\ENDFOR
			\STATE Initialize an $n \cdot (nk+1)$ matrix $\DP$ with elements $0$;
			\STATE \textbf{for} $t=(0,\ldots, k)$, $\DP[1][t] \leftarrow f[1][t]$;
			\FOR{$i \leftarrow (2,\ldots, n)$}
			\FOR{$t \leftarrow (0, \ldots, nk)$}
			\STATE $\DP[i][t] = \sum_{r=0}^t \DP[i - 1][r] \cdot f[i][t-r]$;
			\ENDFOR 
			\ENDFOR 
			\STATE Return $\DP[n][\tau]$.
		\end{algorithmic}
	\end{algorithm}	
	In \Cref{dp_algo_sw_w_equal_t_lottery_model}, we denote $f[i][j]$ as the probability of the event $f_i(W)=j$ (line 2-6). Specifically, $f[i][j]$ represents the sum of the realization probabilities $\lambda_r$ of the deterministic approval sets $S_r$ where the size of the intersection with $W$ is exactly $j$.
	After pre-processing, we initialize the dynamic programming matrix $\DP$, where $\DP[i][j]$ is the probability of $\sum_{\ell \in [i]} f_\ell(W)=j$. Note that, computing $\Prob[\SW(W)=\tau]$  is equivalent to determining the value of $\DP[n][\tau]$. The recursive relation in \Cref{dp_eq_lottery} corresponds to $\DP[n][\tau] = \sum_{r=0}^\tau \DP[n-1][r] \cdot f[n][\tau-r]$. Starting from $\DP[1][0]$, we compute each value in the $\DP$ matrix recursively (line 8-13). Since the algorithm runs in $O(nk\cdot \max(k,\max_{i\in N}|s_i|))$ time, it implies that \SWDist under the Lottery model is solvable in polynomial time.
\end{proof}

For the Candidate Probability model, we prove that \SWDist is also polynomial-time solvable using a different dynamic programming approach.
\begin{theorem}\label{thm:SWDist-CP}
	Under the Candidate Probability model, \SWDist is solvable in polynomial time.
\end{theorem}

\begin{proof}
	Under the candidate probability model, given a committee $W$, for each voter $i$ and each candidate $c\in W$, $p_{i,c}$ falls into three different cases:
	\begin{itemize}
		\item $p_{i,c}=0$, voter $i$ certainly disapproves candidate $c$;
		\item $p_{i,c}=1$, voter $i$ certainly approves candidate $c$;
		\item $p_{i,c}\in (0,1)$, voter $i$ approves candidate $c$ with a uncertain probability $p_{i,c}$. 
	\end{itemize}
	
	We first denote $n^1=|\{(i,c): i\in V,c\in W, p_{i,c}=1\}|$ as the number of certain approvals while $n^u=|\{(i,c): i\in V,c\in W, p_{i,c}\in (0,1)\}|$ as the number of uncertain approvals. Because of the existence of the uncertain approvals, the social welfare $\SW(W)$ of the given committee $W$ is a random variable ranging from $n^1$ to $n^1+n^u$. Furthermore, $\SW(W)$ is distributed according to shifted Poisson binomial distribution with $n^u$ independent Bernoulli trials. We first re-label these uncertain approval pairs as $(i_1,c_1),(i_2,c_2),\ldots, (i_{n^u},c_{n^u})$ and the corresponding success probabilities as $(p_1,p_2,\ldots, p_{n^u})$. Intuitively, we represent all the realization of these uncertain approval pairs as a tree as follows.
	\begin{center}
		\begin{forest}
			for tree={
				semithick,
				minimum size=0.2em, inner sep=0pt,
				math content,
				l sep =2mm,
				s sep = 3mm,
				/tikz/arr/.style = { -{Triangle[angle=45:2pt 3]}, shorten >=1pt},
				edge = arr,
			},
			/tikz/ELS/.style = {
				pos=0.5, node
				font=\footnotesize, text=blue, anchor=#1},
			EL/.style={if n=1{edge label={node[ELS=east]{$#1$}}}
				{edge label={node[ELS=west]{$#1$}}}}
			[, phantom, s sep = 0.5cm
			[(i_1\text{, } c_1), tier=0
			[(i_2\text{, } c_2), EL=1-p_1
			[(i_3\text{, } c_3), EL=1-p_2
			[\dots, edge=dashed] 
			[\dots, edge=dashed]
			]
			[(i_3\text{, } c_3), EL=p_2
			[\ldots,edge=dashed]
			[\dots, edge=dashed]
			]
			]
			[(i_2\text{, } c_2), EL=p_1
			[(i_3\text{, } c_3), EL=1-p_2
			[\dots, edge=dashed]
			[\dots, edge=dashed]
			]
			[(i_3\text{, } c_3), EL=p_2
			[, edge=dashed]
			[\dots, edge=dashed]
			]
			]
			]
			[{1}
			[{2}
			[{3}
			[{n^u}, edge=dashed]
			]
			]
			]
			]
		\end{forest}
	\end{center}
	
	Every path in the tree represents a specific realization of uncertain approvals (trials) transforming into approvals (success) or disapprovals (failure). For \SWDist problem $\Prob[\SW(W) = \tau]$, if $\tau < n^1$ or $\tau > n^1+n^u$, $\Prob[\SW(W) = \tau]=0$. So we mainly focus on $\tau \in [n^1,n^1+n^u]$. Denote $t=\tau-n^1$. Then $\Prob[\SW(W)=\tau]$ can be represented as follows.
	\begin{equation}\label{state_trans_equation}
		\begin{aligned}
			& \quad \Prob[\SW(W)=\tau]= \Prob[\SW(W)-n^1 = t] \\
			& =\Prob[t \text{ out of } n^u \text{ trials succeed}] \\
			& =\Big(\Prob[(t-1) \text{ succeed in } (n^u-1) \text{ trials}] \cdot \Prob[n^u\text{-th trial succeeds}]\Big) \\
			&\quad +\Big(\Prob[t \text{ out of } (n^u-1) \text{ trials succeed}]\cdot \Prob[n^u\text{-th trial fails}]\Big).
		\end{aligned}
	\end{equation}
	Based on the above \Cref{state_trans_equation}, we provide the following dynamic programming \Cref{dp_algo_sw_w_equal_t_candidate_probability_model} to solve \SWDist problem.
	\begin{algorithm}[h]
		\caption{\SWDist algorithm for the Candidate Probability model}
		\label{dp_algo_sw_w_equal_t_candidate_probability_model}
		\begin{algorithmic}[1] 
			\REQUIRE $W$, $n^u$,  $t$, $\Delta$.
			\ENSURE $\DP[n^u][t]$.
			\STATE Initialize an $n^u\cdot (t+1)$ matrix $\DP$ with elements $0$;
			\STATE $\DP[1][0] \leftarrow (1-p_1), \DP[1][1] \leftarrow p_1$;
			\FOR{$i \leftarrow (2,\ldots, n^u)$}
			\FOR{$j \leftarrow (1, \ldots, t)$}
			\STATE $\DP[i][j] \leftarrow \big(p_i\cdot \DP[i-1][j-1]\big) + \big((1-p_i)\cdot \DP[i-1][j]\big)$;
			\ENDFOR 
			\ENDFOR 
			\STATE Return $\DP[n^u][t]$.
		\end{algorithmic}
	\end{algorithm}	
	As we re-labeled the $n^u$ uncertain approval pairs (independent Bernoulli trials), in \Cref{dp_algo_sw_w_equal_t_candidate_probability_model}, $\DP[i][j]$ represents the probability that there are $j$ trials which succeed among the first $i$ trials. Then, computing $\Prob[\SW(W)=\tau]$ is equivalent to computing $\DP[n^u][t]$. According to \Cref{state_trans_equation}, $\DP[n^u][t]=p_{n^u}\cdot \DP[n^u-1][t-1] + (1-p_{n^u})\DP[n^u-1][t]$, corresponding to line 5 in \Cref{dp_algo_sw_w_equal_t_candidate_probability_model}. For the computation, we first initialize $\DP[1][1]$ and $\DP[1][0]$, which represent the probability of success or failure of the first trial $(i_1,c_1)$, respectively. This is equal to the probability that voter $i_1$ approves (disapproves) candidate $c_1$, respectively (line 2). From lines 3 to 7, \Cref{dp_algo_sw_w_equal_t_candidate_probability_model} recursively computes $\DP[i][j]$. Finally, \Cref{dp_algo_sw_w_equal_t_candidate_probability_model} returns $\DP[n^u][t]$, which is equal to $\Prob[\SW(W)=\tau]$. We conclude that \SWDist problem is solvable in polynomial time under the Candidate Probability model as \Cref{dp_algo_sw_w_equal_t_candidate_probability_model} runs in $O(mn\cdot kn)=O(kmn^2)$ time.
\end{proof}

\section{Probability of a Committee being Welfare Maximizing}\label{sec:probswm}
With the fundamental tools for computing the social welfare distribution in place, we now turn to the significant problem of \SWMProb, which is to decide, given a committee $W$ and a threshold $p$, whether $\Prob[W \text{ is }\SWM] \geq p$. While we have shown that \SWDist can be computed in polynomial time under both Lottery and Candidate Probability models, the complexity landscape changes significantly when it comes to \SWMProb. In particular, under the Lottery model, we establish the computational intractability of the problem in general, and present intriguing differences between certain special cases. Specifically, for some parameter $\varepsilon > 0$, \textsc{SWM-Prob}$(W, \varepsilon)$ is NP-hard, whereas \isNecSWM can be solved in polynomial time. In contrast, we are able to utilize the \SWDist solution tool to show that \SWMProb is solvable in polynomial time under the Candidate Probability model. 

\begin{theorem}\label{SWM-Prob-Sharp-P-c}
	Under the Lottery model, \SWMProb is NP-hard.
\end{theorem}

\begin{proof}[Proof Sketch]
	Consider the problem of checking whether it is possible for a given committee $W$ to be $\SWM$, i.e., deciding whether $\Prob[W \text{ is } \SWM] > 0$. We first show that there is a polynomial-time reduction from this problem to \textsc{SWM-Prob}$(W, \varepsilon)$ where $\varepsilon < \prod_{i\in [n]}\min_{r\in [s_i]}\{\lambda_r\}$. We next prove that for any committee $W$, deciding whether $\Prob[W \text{ is } \SWM] > 0$ is NP-hard even when $k=1$ by reducing from the Exact Cover by 3-Sets (X3C) problem \citep{GaJo79a}.
\end{proof}

Moreover, there exists a one-to-one correspondence in the reduction construction between each realization under which the given committee $W$ is $\SWM$ and each solution associated with the X3C instance. The next corollary thus follows.
\begin{corollary}\label{coro:swm_prob_lottery_sharp_p_complete}
	Under the Lottery model, given a committee $W$, computing $\Prob[W \text{ is } \SWM]$ is \#P-complete.
\end{corollary}

If we set the parameter $p=1$, the problem becomes deciding whether the given committee is necessarily to be SWM (\isNecSWM problem), for which we obtain a polynomial-time result. The key idea is to carefully construct a deterministic approval profile and prove that any committee $W$ is \textit{necessarily} $\SWM$ if and only if $W$ is $\SWM$ under the constructed deterministic approval profile. Before the formal proof, we introduce the following lemma.

\begin{lemma}\label{IsNecSWM_lemma}
	Under the Lottery model, 
	given a committee $W$, it is a YES instance for  \isNecSWM  if and only if, for every candidate pair $(c,c^\prime)$ where $c\in W$ and $c^\prime \in C\setminus W$, and for every plausible approval profile $\profile$, the approval score of $c$ is at least as large as the approval score of $c'$, that is, $\AS(c, \profile) \geq \AS(c^\prime, \profile)$.
\end{lemma}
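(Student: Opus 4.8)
The plan is to prove both directions of the biconditional, exploiting the key structural fact that in the lottery model each voter's approval set is drawn independently, so a plausible \emph{profile} consists of choosing, for each voter $i$, some approval set $S$ with $\Delta_i(S)>0$. The central observation I would establish first is a ``separation'' property of the approval score: under the lottery model, the individual coordinates of a profile can be varied independently. Concretely, if $\profile$ and $\profile'$ are plausible and differ only in voter $i$'s approval set, then both are plausible, and more importantly, for any candidate $c$, the value $\AS(c,\profile)$ is a sum of independent per-voter contributions. This lets me reduce statements about ``all plausible profiles'' to statements about each voter's menu of possible approval sets.

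For the backward (``if'') direction, I would argue the contrapositive is unnecessary and instead argue directly: suppose that for every pair $(c,c')$ with $c\in W$, $c'\in C\setminus W$, and every plausible profile $\profile$, we have $\AS(c,\profile)\ge \AS(c',\profile)$. Fix an arbitrary plausible profile $\profile$. I want to show $W$ is $\SWM$ under $\profile$. Recall from the preliminaries that $\SW(W,\profile)=\sum_{c\in W}\AS(c,\profile)$, and that a committee is $\SWM$ exactly when it collects the $k$ largest approval scores. Since the hypothesis gives $\AS(c,\profile)\ge \AS(c',\profile)$ for \emph{every} in/out pair simultaneously under this same $\profile$, every candidate in $W$ weakly dominates every candidate outside $W$ in approval score, so $W$ indeed consists of $k$ candidates of maximal approval score and is therefore $\SWM$. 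As $\profile$ was arbitrary, $W$ is necessarily $\SWM$.

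For the forward (``only if'') direction, I would prove the contrapositive: suppose there exist $c\in W$, $c'\in C\setminus W$, and a plausible profile $\profile$ with $\AS(c,\profile)<\AS(c',\profile)$. Then under $\profile$, swapping $c$ out of $W$ and $c'$ into $W$ strictly increases social welfare (the change in $\SW$ equals $\AS(c',\profile)-\AS(c,\profile)>0$), so $W$ is not $\SWM$ under $\profile$. Since $\profile$ is plausible, it occurs with positive probability, so $W$ is $\SWM$ with probability strictly less than $1$, witnessing a NO instance of \isNecSWM.

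The conceptual content is light here; the main obstacle—and the reason the lemma is useful rather than trivial—is that the condition quantifies over \emph{every} plausible profile, of which there can be exponentially many in the lottery model, yet the lemma's phrasing still ranges over all of them. The real work (deferred to the theorem that uses this lemma) is turning the ``for every plausible profile'' quantifier into an efficiently checkable condition by using voter independence: the extremal profile that minimizes $\AS(c,\profile)-\AS(c',\profile)$ can be constructed coordinate-wise, choosing for each voter $i$ the approval set in its support that is worst for $c$ relative to $c'$. I would flag that this decoupling across voters is exactly what makes the quantifier over all plausible profiles tractable, but within the lemma itself I only need the existence/for-all logic above, which follows directly from the additive structure of $\SW$ and $\AS$.
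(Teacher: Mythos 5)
Your proof is correct and matches the paper's own argument: the backward direction observes that the pairwise dominance condition forces $W$ to consist of $k$ top-scoring candidates under every plausible profile, and the forward direction uses the single-candidate swap $W'=(W\setminus\{c\})\cup\{c'\}$ to exhibit a plausible profile where $W$ is not $\SWM$. The extra remarks about voter-wise decoupling are not needed for the lemma (as you note, they belong to the follow-up theorem), but they do not detract from the argument.
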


With Lemma~\ref{IsNecSWM_lemma} in hand, to prove that \isNecSWM is in P under the Lottery model, it is sufficient to show that it can be checked in polynomial time whether, for all candidate pairs $(c, c^\prime)$ where $c \in W$ and $c^\prime \in C \setminus W$, the condition $\AS(c, \profile) \geq \AS(c^\prime, \profile)$ holds for every plausible approval profile $\profile$.

\begin{theorem}\label{thm:IsNecSWM-Lot}
	Under the Lottery model, \isNecSWM is solvable in polynomial time.
\end{theorem}

\begin{proof}[Proof Sketch]
	For each candidate pair $(c, c^\prime)$, we construct a \textit{deterministic} approval profile $\bar{\profile}$ and demonstrate that if $(c, c^\prime)$ satisfies $\AS(c,\bar{\profile}) \geq \AS(c^\prime,\bar{\profile})$ for the constructed profile $\bar{\profile}$, then $(c, c^\prime)$ satisfies $\AS(c,\profile) \geq \AS(c^\prime,\profile)$ for all plausible approval profiles $\profile$.
	The construction is as follows. Given a committee $W$, for each pair $(c, c^\prime)$ where $c \in W$ and $c^\prime \in C \setminus W$ and each voter’s approval set $A_i$, there are four possible cases: (1) $c^\prime \in A_i$ and $c \notin A_i$; (2) $c^\prime \in A_i$ and $c \in A_i$; (3) $c^\prime \notin A_i$ and $c \notin A_i$; (4) $c^\prime \notin A_i$ and $c \in A_i$. We construct the deterministic profile $\bar{\profile}$ as follows: for each voter $i$, set $\bar{A_i}$ by selecting a plausible approval set in the following priority order: (1) $\succ$ (2) $\succ$ (3) $\succ$ (4). That is, we first check whether there exists a plausible approval set such that $c^\prime$ is in the approval set while $c$ is not. If such an approval set exists, we set it as $\bar{A}_i$ in the deterministic approval profile $\bar{\profile}$; otherwise, we consider cases (2), (3), and (4) in sequence.	
	
	Next, we prove that if a pair $(c, c^\prime)$ satisfies $\AS(c,\bar{\profile}) \geq \AS(c^\prime,\bar{\profile})$ in the constructed profile $\bar{\profile}$, then the pair $(c, c^\prime)$ satisfies $\AS(c, \profile) \geq \AS(c^\prime, \profile)$ for all plausible approval profiles $\profile$.
	
	Since the construction and verification can be computed in polynomial time, \isNecSWM under the Lottery model is solvable in polynomial time.
\end{proof}

We now turn into the Candidate Probability model and show that, under this model, the \SWMProb problem is solvable in polynomial time by showing $\Prob[W \text{ is }\SWM]$ is polynomial-time computable.
\begin{theorem}\label{thm:MaxSW-Prob-CP-3VA}
	Under the Candidate Probability model, the problem \SWMProb is solvable in polynomial time.
\end{theorem}

\begin{proof}
	Given a committee $W$, we may re-label the candidates, without loss of generality, so that 
	$W=\{c_1,\cdots, c_k\}$. Let $\AS(c)$ denote the random variable corresponding to the approval score of candidate $c\in C$. 
	Under the Candidate Probability model, the probability that a committee $W$ is $\SWM$ is equivalent to the probability of sampling an approval profile where the approval scores of the candidates in $W = \{c_1, \cdots, c_k\}$ rank among the top-$k$.
	\begin{equation*}\label{maxSW-Prob-equation}
		\begin{split}
			\Prob\left[W \!\text{ is }\!\SWM \right] 
			&=\! \sum_{\mathcal{A}\in \mathbf{A}}\!\!\Delta(\mathcal{A})\!\cdot\! \mathbb{I}[\text{\sc IsSWM}(W, \mathcal{A})]\\
			&=\! \sum_{\mathcal{A}\in \mathbf{A}}\!\! \Delta(\mathcal{A})\!\cdot\! \mathbb{I}[\AS(c_1,\profile),\cdots\!, \AS(c_k,\profile) \text{ rank top-}k]\\
			&=\!\Prob\left[\max\limits_{c\in C\setminus W} \{\AS(c) \}  \leq \min_{1\leq i\leq k}\{\AS(c_i)\}\right].
		\end{split}
	\end{equation*}
	Conditioning on the value of $\min_{1\leq i\leq k} \AS(c_i)$, the probability $\Prob\left[\max\limits_{c\in C\setminus W} \{\AS(c) \}  \leq \min_{1\leq i\leq k}\{\AS(c_i)\}\right]$ is rewritten as
	\begin{align*}
		\sum_{t=0}^n\! \Prob\left[\!\min_{1\leq i\leq k}\!\{\AS(c_i)\}\! =\! t\right]\! \cdot \! \Prob\!\left[\!\max\limits_{c\in C\setminus W} \! \{\AS(c) \} \!\leq\! t \!\mid \! \min_{1\leq i\leq k}\! \{\AS(c_i)\}\! =\! t\right]\!.
	\end{align*}
	For any two candidates $c_i, c_j \in C$, $\AS(c_i)$ is independent of $\AS(c_j)$ because for each voter $v\in V$, the event that $v$ approves $c_i$ is independent of the event where $v$ approves $c_j$. Notably, for the conditional probability $\Prob[\min_{1\leq i\leq k}\{\AS(c_i)\} ] = t$, the random variable $\min_{1\leq i\leq k}\{\AS(c_i)\}$ only depends on $\{\AS(c_1),\cdots, \AS(c_k)\}$ and is independent of $\{\AS(c_{k+1}),\cdots, \AS(c_{m})\}$. It follows that
	\begin{align*}\label{maxSW-Prob-equation-condition-prob}
		\Prob\left[\max\limits_{c\in C\setminus W} \{\AS(c) \}  \leq \min_{1\leq i\leq k}\{\AS(c_i)\}\right]=\sum_{t=0}^{n} \Prob\left[\min_{1\leq i\leq k}\{\AS(c_i)\} = t\right]  \cdot \left( \prod_{c\in C\setminus W} \Prob\left[\AS(c) \leq t \right]\right).
	\end{align*}
	Now it boils down to compute $\Prob\left[\min_{1\leq i\leq k}\{\AS(c_i)\} = t \right]$, which can be reformulated as 
	\begin{align*}
	\Prob\left[\min_{1\leq i\leq k}\{\AS(c_i)\} = t \right]
		&= \Prob \left[\min_{1\leq i\leq k}\{\AS(c_i)\} > (t - 1) \right]  - \Prob \left[\min_{1\leq i\leq k}\{\AS(c_i)\} > t\right] \\
		&= \prod_{1\leq i\leq k} \Prob \left[\AS(c_i) > (t - 1) \right]    -  \prod_{1\leq i\leq k} \Prob \left[\AS(c_i) > t\right] \\
		&= \prod_{c_i \in W} \left( \sum_{r=t}^{n}  \Prob \left[\AS(c_i) =r\right] \right) -  \prod_{c_i \in W} \left( \sum_{r=t + 1}^{n}  \Prob \left[\AS(c_i) =r\right]  \right). 
	\end{align*}
	
	Since the \SWDist problem under the Candidate Probability model can be solved in polynomial time (see \Cref{thm:SWDist-CP}), computing $\Prob \left[\AS(c_i) = r\right]$ can be done in  polynomial time, implying that \SWMProb under the Candidate Probability model is solvable in polynomial time.
\end{proof}

\section{Committees that Maximize $\SWM$ Probability }\label{sec:maxswm}
We now turn our attention to the problem of \maxSWM, which asks whether there exists a committee $W$ such that $W$ is $\SWM$ with probability at least $p$.
Our first result is that under the Lottery model, \maxSWM is NP-hard, even in the instance with single voter. 

\begin{theorem}\label{th:MaxSWM-Lottery}
	Under the Lottery model, \maxSWM is NP-hard, even when $n=1$.
\end{theorem}

\begin{proof}[Proof Sketch]
	We prove that the problem is NP-hard via a reduction from the {\sc Min-$r$-Union} (M$r$U) problem \citep{VINT02a}. In M$r$U, we are given a universe $U$ of $m$ elements, a collection $\mathcal{S}\subseteq 2^U$ of $q$ sets, and two integers $r\leq q$ and $\ell$. The goal is to decide whether there exists a sub-collection $I \subseteq [q]$ with size $r$ such that $|\bigcup_{i\in I} S_i| \leq \ell$. In the reduction, our construction maps a YES M$r$U instance 
	to the existence of a committee $W$ of size $\ell$ with $\Prob[W \text{ is } \SWM] \geq \frac{r}{q}$, and vice versa.
\end{proof}

\Cref{th:MaxSWM-Lottery} implies that \maxSWM is NP-hard also for the Joint Probability model, as both uncertainty models coincide in single-voter instances. In view of this computational intractability, we consider the special case when $p=1$, i.e., the \existsNecSWM problem, which involves determining whether there exists a committee $W$ that is $\SWM$ with probability $1$. For ease of clarity, we say a committee $W^\ast$ is \textit{necessarily} $\SWM$ if $W^\ast$ is $\SWM$ with probability $1$.

We present \Cref{algo:ExistsNecSWM-Lot} to show that \existsNecSWM is solvable in polynomial time.
The algorithm is built upon the ``dominance graph". Specifically, for any candidate pair $(c_i,c_j) \in C$, we say that $c_i$ ``dominates" $c_j$ if for every plausible approval profile $\profile$, $\AS(c_i,\profile) \geq \AS(c_j,\profile)$, denoted by $c_i \succeq^{\AS} c_j$. 
The dominance relation between any candidate pair $(c_i,c_j)$ can be verified in polynomial time by constructing the \textit{deterministic} approval profile $\bar{\profile}$ established in \Cref{thm:IsNecSWM-Lot}.

By enumerating all candidate pairs, we construct a dominance digraph $G=(C,E)$ where $C$ is the set of candidates and $E$ contains directed edge representing dominance relations. Specifically, each edge $(c_i,c_j)\in E$ indicates $c_i \succeq^{\AS} c_j$. In case $\AS(c_i,\profile) = \AS(c_j, \profile)$, we break ties by lexicographic order, treating $c_i$ as dominating $c_j$. No edge is added between $c_i$ and $c_j$ if neither dominates the other. Note that the dominance relation is transitive, and thus the digraph $G$ is acyclic. With the dominance graph constructed, we now present an auxiliary lemma that underpins the design of \Cref{algo:ExistsNecSWM-Lot}.
\begin{lemma}
	\label{lemma:exist_nec_swm_no_arc}
	Given a dominance graph $G=(C,E)$, for any $c_i,c_j\in C$ with no edge between $c_i$ and $c_j$, any necessarily SWM committee $W^\ast$ satisfies either $\{c_i,c_j\} \subseteq W^\ast$ or $\{c_i,c_j\}\cap W^\ast=\emptyset$.
\end{lemma}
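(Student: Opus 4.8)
The plan is to argue by contraposition, reducing the statement to the per-pair characterization of necessarily SWM committees established in \Cref{IsNecSWM_lemma}. Suppose, toward a contradiction, that $W^\ast$ is necessarily SWM but that exactly one of $c_i,c_j$ lies in $W^\ast$. By symmetry I may assume $c_i\in W^\ast$ and $c_j\in C\setminus W^\ast$; the remaining case is identical with the roles of $c_i$ and $c_j$ exchanged.

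The key step is to observe that \Cref{IsNecSWM_lemma} applies directly to the pair $(c_i,c_j)$. Since $W^\ast$ is necessarily SWM, $c_i\in W^\ast$, and $c_j\in C\setminus W^\ast$, the lemma guarantees that $\AS(c_i,\profile)\geq \AS(c_j,\profile)$ for every plausible approval profile $\profile$. But this inequality holding over all plausible profiles is precisely the definition of the dominance relation $c_i\succeq^{\AS} c_j$. By the construction of the dominance digraph $G$, this means the directed edge $(c_i,c_j)$ belongs to $E$, contradicting the hypothesis that there is no edge between $c_i$ and $c_j$. In the symmetric case $c_j\in W^\ast$, $c_i\notin W^\ast$ one obtains the edge $(c_j,c_i)$, again a contradiction.

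Since both cases lead to a contradiction, no necessarily SWM committee can contain exactly one of $c_i$ and $c_j$, which is exactly the claim that every necessarily SWM committee $W^\ast$ satisfies $\{c_i,c_j\}\subseteq W^\ast$ or $\{c_i,c_j\}\cap W^\ast=\emptyset$.

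I expect the only subtlety --- rather than a genuine obstacle --- to be the handling of ties in the definition of $G$: if $\AS(c_i,\profile)=\AS(c_j,\profile)$ for all plausible profiles, the lexicographic tie-breaking rule still inserts an edge, so the hypothesis ``no edge between $c_i$ and $c_j$'' faithfully encodes that neither $c_i\succeq^{\AS} c_j$ nor $c_j\succeq^{\AS} c_i$ holds. Checking this point is what ensures the contradiction goes through uniformly in both the strict-separation scenario and the all-ties scenario.
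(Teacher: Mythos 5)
Your proof is correct and rests on the same underlying idea as the paper's: a necessarily SWM committee containing $c_i$ but not $c_j$ would force $\AS(c_i,\profile)\geq\AS(c_j,\profile)$ on every plausible profile, i.e.\ a dominance relation and hence an edge, contradicting the hypothesis. The only difference is presentational --- you obtain the inequality by invoking \Cref{IsNecSWM_lemma} directly, while the paper re-runs the underlying swap argument inline (comparing $\bar W\cup\{c_i\}$ and $\bar W\cup\{c_j\}$ on witnessing profiles), and your closing remark on lexicographic tie-breaking correctly disposes of the one case where the edge might point the other way.
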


Lemma~\ref{lemma:exist_nec_swm_no_arc} establishes that, for each candidate pair $c_i,c_j$ without dominance relation, if a necessarily $\SWM$ committee exists, then $c_i$ and $c_j$ must either both be included or both be excluded. Leveraging this property, we now present \Cref{algo:ExistsNecSWM-Lot} to solve the \existsNecSWM problem.

\begin{algorithm}[h]
	\caption{\existsNecSWM algorithm for the Lottery model}
	\label{algo:ExistsNecSWM-Lot}
	\begin{algorithmic}[1] 
		\REQUIRE $G =(C, E), k$. \\
		\ENSURE YES or NO.
		\STATE Initialize $W \leftarrow \emptyset$ and  $\bar{G}\leftarrow(\bar{C}=C,\bar{E}=E)$;
		\WHILE{$|W| \leq k$}
		\STATE Select a candidate $c^\ast$ with zero indegree in $\bar{G}$ (breaking ties arbitrarily); 
		\STATE Add $c^\ast$ into $W$;
		\STATE Update $\bar{G}$ by deleting all the edges in $\{(c^\ast, c^\prime), c^\prime \in C\setminus \{c^\ast\}: (c^\ast, c^\prime)\in \bar{E}\}$;
		\ENDWHILE
		\STATE Return YES if $\forall~ c\in W, c^\prime \in C\setminus W: (c,c^\prime) \in E$ otherwise NO;
	\end{algorithmic}
\end{algorithm}

\Cref{algo:ExistsNecSWM-Lot} takes the dominance graph $G$ as input. First, it initializes an empty candidate set $W$ and creates a duplicate of $G$, denoted as $\bar{G}$. The algorithm then iteratively selects a candidate $c^\ast$ with zero indegree, adds it to $W$, and updates $\bar{G}$ by removing the outgoing edges from $c^\ast$ in each round (lines 2-6). Since $G$ is acyclic, a node with zero indegree is guaranteed to exist in the first iteration of the while loop. In each subsequent iteration, $\bar{G}$ remains acyclic as it is a subgraph of $G$, ensuring that there is always a node with zero indegree, which implies that the algorithm terminates. 

\begin{theorem}\label{thm:ExistsNecSWM-Lot}
	Under the Lottery model, \existsNecSWM is solvable in polynomial time.
\end{theorem}

For Candidate Probability model, identifying the complexity of \maxSWM problem becomes substantially more challenging, even under the restrictive 3VA setting. A natural initial hypothesis is that the expected $\SWM$ committee also maximizes the probability of being $\SWM$. Unfortunately, this intuition does not hold in general. We provide a counterexample involving $3$ voters and $4$ candidates to illustrate the inherent difficulty.
\[
\bordermatrix{ & 1 & 2 & 3 & 4\cr
	1 & $0.5$ & $1.0$ & $1.0$ & $1.0$ \cr
	2 & $0.5$ & $0.5$ & $1.0$ & $0.5$ \cr
	3 & $0.5$ & $0.0$ & $0.0$ & $0.0$} \qquad
\]
Each element $p_{i,c}$ is the probability that voter $i$ approves candidate $c$. We first compute that committees $W_1=\{1,3\}$, $W_2=\{2,3\}$ and $W_3=\{3,4\}$ all achieve the highest expected social welfare, with a value of $2.5$. However, their probability of being $\SWM$ differ: $\Prob \left[W_1 \text{ is } \SWM \right]=\frac{19}{32}$, $\Prob \left[W_2 \text{ is }\SWM\right]=\frac{18}{32}$, and $\Prob \left[W_3 \text{ is }\SWM \right]=\frac{18}{32}$. This example highlights the inherent complexity and subtlety of the \maxSWM problem, even under the 3VA model. Despite these challenges, we prove that it is polynomial-time solvable when $n=1$ or $k$ is constant (See details in the appendix). Beyond these aforementioned results under restrictive assumptions, we also establish a positive result for \existsNecSWM problem.
\begin{theorem}\label{thm:ExistsNecSWM-CP-3VA}
	Under the Candidate Probability model, \existsNecSWM is solvable in polynomial time.
\end{theorem}

The core idea is an poly-time algorithm which constructs a profile consisting solely of certain approval ballots and computes the $\SWM$ committee for this deterministic approval profile\footnote{Tie-breaking by selecting the committee with the highest number of approvals with positive probabilities.}. The algorithm returns YES if the committee is a certificate for YES instance for \isNecSWM, otherwise returns NO.
It is worth mentioning that for both Lottery and Candidate Probability model,
not only can we determine the \existsNecSWM problem in polynomial time, but we can also compute the committee maximizing the probability of being $\SWM$ by our proposed polynomial-time algorithms. 
	
	
\section{Robust Committees}\label{sec:robust}

In many settings, not only deciding the \ExistsSWMProb problem is intractable, but it can also be highly sensitive to the realizations of the approval profiles. In particular, it might be the case that a committee that maximizes probability of being $\SWM$ may only be social welfare maximizing for only a small fraction of the plausible profiles while performing poorly in the remaining plausible profiles. As a result, this motivates us to study \textit{robust} committees. Intuitively, a committee is considered robust if it achieves approximately optimal social welfare with high probability.
\begin{definition}[$(\alpha,\beta)$-Robust Committee]
	\label{def:roubst_committee}
	Given any uncertain preference model, denote by $W^*$ the \textit{random committee} that maximizes social welfare under any plausible approval profile and a \textbf{random variable} $Z$ representing its social welfare.
	A committee $W$  is $(\alpha,\beta)$-robust if it satisfies $$\Prob\left[\SW(W) \geq \alpha \cdot Z \right]\geq \beta, \text{ where } \alpha, \beta \in (0,1].$$
\end{definition}
Although  we know that the a committee maximizing the expected social welfare may not be the committee with highest probability of being $\SWM$, the following result shows that it is guaranteed to be  $(\frac{1}{2},\frac{1}{2})$-robust.

\begin{theorem}\label{eswm_committee_robust_3va_model}
	Under the 3VA model, any committee $W$ maximizing the expected social welfare is $(\frac{1}{2},\frac{1}{2})$-robust.
\end{theorem}

\begin{proof}
	Let $W$ be a committee that maximizes the expected social welfare. To prove $W$ is $(\frac{1}{2}, \frac{1}{2})$-robust, we aim to show that $\Prob[\SW(W) \geq \frac{1}{2}\cdot Z] \geq \frac{1}{2}$, where $Z$ is the random variable representing the social welfare of the social welfare maximizing committee $W^*$. We next introduce the committee $\bar{W}$, 
	denoting the committee that has the maximum number of non-zero approval entries (that is, the number of pairs $(i,c)$ for $i\in N$ and $c\in C$ such that $p_{i,c}\neq 0$). 
	
	For any plausible approval profile $\mathcal{A}$ under the 3VA model, we first observe that the social welfare achieved by $W^*$ is at most the number of non-zero approval entries for any $k$ candidates in $C$. Moreover, recall the definition of $\bar{W}$ and we know that the number of non-zero approval entries for any $k$ candidates in $C$ is upper-bounded by $2\cdot \mathbb{E}[\SW(\bar{W})]$ since $\bar{W}$ has the highest number of non-zero approval entries and $\mathbb{E}[\SW(\bar{W})]$ is equal to the sum value of all approval entries for $\bar{W}$ and the equality holds for the upper bound when all the non-zero approval entries for committee $\bar{W}$ are $\frac{1}{2}$. That is, $\forall~c\in \bar{W}$, $\forall~i\in N$ such that $p_{i,c}\neq 0$, then $p_{i,c}=\frac{1}{2}$. Consequently, we have $\SW(W^*) \leq 2\cdot \mathbb{E}[\SW(\bar{W})]$ for any plausible approval profile $\profile$. 
	On the other hand, by the definition of $W$, we have $\mathbb{E}[\SW(W)] \geq \mathbb{E}[\SW(\bar{W})]$. Hence, we bound the probability of $\SW(W) \geq \frac{1}{2}\cdot Z$ by 
	\begin{align*}
		\Prob[\SW(W) \geq \frac{1}{2}\cdot Z] &= \Prob[\SW(W) \geq \frac{1}{2}\cdot \SW(W^*)]\\
		&\geq \Prob[\SW(W) \geq \mathbb{E}[\SW(\bar{W})]] \\
		& \geq \Prob[\SW(W) \geq \mathbb{E}[\SW({W})]]=\frac{1}{2}.
	\end{align*} 
	The last step holds as $\SW(W)$ follows a shifted binomial distribution, i.e.,  $\SW(W) \sim \text{Bin}(y, \frac{1}{2}) + x$, thus the probability that $\SW(W)$ is larger than its expectation is $\frac{1}{2}$. Therefore, for any expected social welfare maximization committee $W$ under the 3VA model, we have $\Prob\left[\SW(W) \geq \frac{1}{2}\cdot Z\right] \geq \frac{1}{2}$, i.e., $W$ is $(\frac{1}{2}, \frac{1}{2})$-robust.
\end{proof}

Since a committee maximizing the expected social welfare can be computed in polynomial time, \Cref{eswm_committee_robust_3va_model} shows that $(\frac{1}{2}, \frac{1}{2})$-robust committees can be efficiently computed under the 3VA model. However, our next result shows that a similar result does not hold in the more general setting of the Candidate Probability model.
\begin{theorem}\label{no_robust_committee_candidate_probability_model}
	Under the Candidate Probability model, for $\alpha,\beta\in (0,1]$, no committee is  $(\alpha,\beta)$-robust.
\end{theorem}

\begin{proof}
	Given any $ \alpha, \beta\in (0,1]$, we construct an instance under which no committee is $(\alpha,\beta)$-robust.
	Consider an instance with single voter. The sole voter independently approves each candidate with identical probability $p = \frac{\beta}{2}$. Consider the candidate number $m$ such that $m> \frac{\log(p)}{\log(1-p)}$ and set the committee size to be $k=1$. 
	
	Since all candidates have the same approval probability and the committee size is $1$, any committee will have the same performance in terms of the probability of being $\SWM$. Without loss of generality, let $W=\{c_1\}$. Recall the definition of random variable $Z$ which represents the social welfare of the $\SWM$ committee. In the constructed instance, we express $Z$ as $Z=\max_{c_i\in C}\{\AS(c_i)\}$. For committee $W$, we have
	\begin{align}\label{eq:robust1}
		\Prob\left[ \SW(W)\! \geq\! \alpha \!\cdot\! Z\right]
		&=\Prob\left[ \AS(c_1) \geq \alpha \cdot \max_{c_i\in C\setminus\{c_1\}}\{\AS(c_i)\}\right].
	\end{align}
	Now consider $\Prob\left[\max_{c_i\in C\setminus\{c_1\}}\{\AS(c_i)\} = 1\right]$, we obtain 
	\begin{align}\label{eq:robust2}
		\Prob\left[\max_{c_i\in C\setminus\{c_1\}}\{\AS(c_i)\} = 1 \right]
		&= 1 - \Prob\left[~\forall\, c_i\in C\setminus\{c_1\}, \AS(c_i)=0 \right]\notag \\
		&= 1 - (1 - p)^{m - 1}.
	\end{align}
	Let $Y$ be $\max_{c_i\in C\setminus\{c_1\}}\{\AS(c_i)\}$.  Conditioning on $Y$, we have 
	\begin{align*}
		\Prob\!\left[\! \SW(W)\! \geq\! \alpha \!\cdot\! Z\right]
		& = \Prob\left[ \AS(c_1) \geq \alpha \cdot Y \right] \tag{\Cref{eq:robust1}} \\
		&= \Big(\Prob\left[\AS(c_1) \geq 0 \mid Y = 0 \right] \cdot \Prob\left[Y = 0\right]\Big) \\
		&\quad + \Big(\Prob\left[\AS(c_1) \geq \alpha \mid Y = 1 \right] \cdot \Prob\left[Y = 1\right]\Big)\\
		&= \Prob\left[Y = 0\right] + \Prob\left[\AS(c_1) \geq \alpha \right]\cdot \Prob\left[Y=1\right] \tag{Independence} \\
		&= (1-p)^{m-1} \!+\! \Prob\left[ \AS(c_1) \geq \alpha \right] \!\cdot\! \left(1 - (1 - p)^{m - 1}\right) \tag{\Cref{eq:robust2}}\\
		&=  (1-p)^{m-1} + p \cdot \left(1 - (1 - p)^{m - 1}\right) \tag{With probability $p$ approving $c_1$} \\
		&= p + (1 - p)^m < \beta. \tag{$m > \frac{\log(p)}{\log(1-p)}$}  
	\end{align*}
	This implies that there is no committee satisfying $(\alpha, \beta)$-robust in this instance.
\end{proof}
Since every candidate probability instance admits a unique representation as a lottery model (Proposition~\ref{unique_joint_prob_for_lottery}). Hence, the impossibility result applies for the lottery and joint probability model.

	
\section{Conclusions}
This paper initiates the study of social welfare maximization under uncertainty in approval-based committee voting. Given the relevance of such voting rules in applications like recommender systems and block-chain governance, our framework provides a foundation for incorporating uncertainty into these domains. Many of our results include explicit polynomial-time algorithms, offering a general toolkit for addressing a broader class of set selection problems under uncertainty.
Our analysis focuses on social welfare defined by the size of the intersection between the selected committee and voters’ approval sets. It would be interesting to investigate whether some of our techniques extend to more general satisfaction functions. A particularly compelling open question is the computational complexity of \maxSWM under the candidate probability model, even in the restricted 3-Valued Approval (3VA) setting. Since \maxSWM is NP-hard in other models, a promising direction for future work is the development of approximation algorithms for this problem.


\section*{Acknowledgment}
	This work was partially supported by the NSF-CSIRO grant on “Fair Sequential Collective Decision-Making” (Grant No. RG230833) and the ARC Laureate Project FL200100204 on “Trustworthy AI.” Venkateswara Rao Kagita was funded by the Science and Engineering Research Board (SERB) through the SIRE Fellowship (Project ID SIR/2022/001217).
	
	\bibliographystyle{abbrvnat}
	\bibliography{sample}

\begin{thebibliography}{28}
\providecommand{\natexlab}[1]{#1}
\providecommand{\url}[1]{\texttt{#1}}
\expandafter\ifx\csname urlstyle\endcsname\relax
  \providecommand{\doi}[1]{doi: #1}\else
  \providecommand{\doi}{doi: \begingroup \urlstyle{rm}\Url}\fi

\bibitem[Aziz et~al.(2017)Aziz, Brill, Conitzer, Elkind, Freeman, and
  Walsh]{ABC+17a}
H.~Aziz, M.~Brill, V.~Conitzer, E.~Elkind, R.~Freeman, and T.~Walsh.
\newblock Justified representation in approval-based committee voting.
\newblock \emph{Social Choice and Welfare}, 48\penalty0 (2):\penalty0 461--485,
  2017.

\bibitem[Aziz et~al.(2018)Aziz, Elkind, Huang, Lackner, Fern{\'{a}}ndez, and
  Skowron]{AEH+18a}
H.~Aziz, E.~Elkind, S.~Huang, M.~Lackner, L.~S. Fern{\'{a}}ndez, and
  P.~Skowron.
\newblock On the complexity of extended and proportional justified
  representation.
\newblock In \emph{Proceedings of the 32nd AAAI Conference on Artificial
  Intelligence (AAAI)}, pages 902--909, 2018.

\bibitem[Aziz et~al.(2019)Aziz, Bir{\'o}, de~Haan, and Rastegari]{ABH+19a}
H.~Aziz, P.~Bir{\'o}, R.~de~Haan, and B.~Rastegari.
\newblock Pareto optimal allocation under uncertain preferences: uncertainty
  models, algorithms, and complexity.
\newblock \emph{Artificial Intelligence}, 276:\penalty0 57--78, 2019.

\bibitem[Aziz et~al.(2020)Aziz, Bir{\'o}, Gaspers, de~Haan, Mattei, and
  Rastegari]{ABG+20a}
H.~Aziz, P.~Bir{\'o}, S.~Gaspers, R.~de~Haan, N.~Mattei, and B.~Rastegari.
\newblock Stable matching with uncertain linear preferences.
\newblock \emph{Algorithmica}, 82:\penalty0 1410--1433, 2020.

\bibitem[Aziz et~al.(2024{\natexlab{a}})Aziz, Iliffe, Li, Ritossa, Sun, and
  Suzuki]{RSS24}
H.~Aziz, I.~Iliffe, B.~Li, A.~Ritossa, A.~Sun, and M.~Suzuki.
\newblock Envy-free house allocation under uncertain preferences.
\newblock In \emph{Proceedings of the 38th AAAI Conference on Artificial
  Intelligence (AAAI)}, pages 9477--9484, 2024{\natexlab{a}}.

\bibitem[Aziz et~al.(2024{\natexlab{b}})Aziz, Kagita, Rastegari, and
  Suzuki]{AKR+24a}
H.~Aziz, V.~R. Kagita, B.~Rastegari, and M.~Suzuki.
\newblock Approval-based committee voting under uncertainty.
\newblock In \emph{Proceedings of the 17th International Symposium of
  Algorithmic Game Theory (SAGT)}, 2024{\natexlab{b}}.

\bibitem[Bampis et~al.(2024)Bampis, Dogeas, Erlebach, Megow, Schl\"{o}ter, and
  Trehan]{BDE+24a}
E.~Bampis, K.~Dogeas, T.~Erlebach, N.~Megow, J.~Schl\"{o}ter, and A.~Trehan.
\newblock Competitive query minimization for stable matching with one-sided
  uncertainty.
\newblock In \emph{Proceedings of the Approximation, Randomization, and
  Combinatorial Optimization. Algorithms and Techniques (APPROX/RANDOM)},
  volume 317 of \emph{Leibniz International Proceedings in Informatics
  (LIPIcs)}, pages 17:1--17:21, 2024.
\newblock ISBN 978-3-95977-348-5.

\bibitem[Barrot et~al.(2013)Barrot, Gourv{\`{e}}s, Lang, Monnot, and
  Ries]{BLMR13}
N.~Barrot, L.~Gourv{\`{e}}s, J.~Lang, J.~Monnot, and B.~Ries.
\newblock Possible winners in approval voting.
\newblock In \emph{Proceedings of 3rd International Conference of Algorithmic
  Decision Theory (ADT)}, pages 57--70, 2013.

\bibitem[Boehmer et~al.(2024)Boehmer, Brill, Cevallos, Gehrlein,
  Sánchez-Fernández, and Schmidt-Kraepelin]{BBC+24a}
N.~Boehmer, M.~Brill, A.~Cevallos, J.~Gehrlein, L.~Sánchez-Fernández, and
  U.~Schmidt-Kraepelin.
\newblock Approval-based committee voting in practice: A case study of
  (over-)representation in the polkadot blockchain.
\newblock pages 9519--9527, 2024.

\bibitem[Boutilier and Rosenschein(2016)]{BORO16a}
C.~Boutilier and J.~S. Rosenschein.
\newblock Incomplete information and communication in voting.
\newblock \emph{Handbook of computational social choice}, 10:\penalty0
  223--257, 2016.

\bibitem[Brill et~al.(2023)Brill, Dindar, Israel, Lang, Peters, and
  Schmidt-Kraepelin]{BDI+23a}
M.~Brill, H.~Dindar, J.~Israel, J.~Lang, J.~Peters, and U.~Schmidt-Kraepelin.
\newblock Multiwinner voting with possibly unavailable candidates.
\newblock In \emph{Proceedings of the 37th AAAI Conference on Artificial
  Intelligence (AAAI)}, pages 5532--5539, 2023.

\bibitem[Brill et~al.(2024)Brill, Freeman, Janson, and Lackner]{BFJ+17a}
M.~Brill, R.~Freeman, S.~Janson, and M.~Lackner.
\newblock Phragm{\'e}n’s voting methods and justified representation.
\newblock \emph{Mathematical Programming}, 203\penalty0 (1-2):\penalty0 47--76,
  2024.

\bibitem[Burdges et~al.(2020)Burdges, Cevallos, Czaban, Habermeier, Hosseini,
  Lama, Alper, Luo, Shirazi, Stewart, et~al.]{BCC+20a}
J.~Burdges, A.~Cevallos, P.~Czaban, R.~Habermeier, S.~Hosseini, F.~Lama, H.~K.
  Alper, X.~Luo, F.~Shirazi, A.~Stewart, et~al.
\newblock Overview of polkadot and its design considerations.
\newblock \emph{arXiv preprint arXiv:2005.13456}, 2020.

\bibitem[Chakraborty et~al.(2019)Chakraborty, Patro, Ganguly, Gummadi, and
  Loiseau]{CPG+19a}
A.~Chakraborty, G.~K. Patro, N.~Ganguly, K.~P. Gummadi, and P.~Loiseau.
\newblock Equality of voice: Towards fair representation in crowdsourced top-k
  recommendations.
\newblock In \emph{Proceedings of the Conference on Fairness, Accountability,
  and Transparency (FAT)}, pages 129--138, 2019.

\bibitem[Do et~al.(2022)Do, Hervouin, Lang, and Skowron]{DHL+22a}
V.~Do, M.~Hervouin, J.~Lang, and P.~Skowron.
\newblock Online approval committee elections.
\newblock In \emph{Proceedings of the 31st International Joint Conference on
  Artificial Intelligence (IJCAI)}, pages 251--257, 2022.

\bibitem[Garey and Johnson(1979)]{GaJo79a}
M.~R. Garey and D.~S. Johnson.
\newblock \emph{Computers and Intractability: A Guide to the Theory of
  NP-Completeness (Series of Books in the Mathematical Sciences)}.
\newblock W. H. Freeman, 1979.

\bibitem[Gawron and Faliszewski(2024)]{GaFa22a}
G.~Gawron and P.~Faliszewski.
\newblock Using multiwinner voting to search for movies.
\newblock \emph{Theory and Decision}, pages 1--32, 2024.

\bibitem[Halpern et~al.(2023)Halpern, Kehne, Procaccia, Tucker-Foltz, and
  W{\"u}thrich]{HKPTW23}
D.~Halpern, G.~Kehne, A.~D. Procaccia, J.~Tucker-Foltz, and M.~W{\"u}thrich.
\newblock Representation with incomplete votes.
\newblock In \emph{Proceedings of the 37th AAAI Conference on Artificial
  Intelligence (AAAI)}, pages 5657 -- 5664, 2023.

\bibitem[Hazon et~al.(2012)Hazon, Aumann, Kraus, and Wooldridge]{HAK+12a}
N.~Hazon, Y.~Aumann, S.~Kraus, and M.~Wooldridge.
\newblock On the evaluation of election outcomes under uncertainty.
\newblock \emph{Artificial Intelligence}, 189:\penalty0 1--18, 2012.

\bibitem[Imber et~al.(2025)Imber, Israel, Brill, and Kimelfeld]{IIBK22a}
A.~Imber, J.~Israel, M.~Brill, and B.~Kimelfeld.
\newblock Approval-based committee voting under incomplete information.
\newblock \emph{Artificial Intelligence}, page 104381, 2025.

\bibitem[Israel and Brill(2024)]{IsBr24a}
J.~Israel and M.~Brill.
\newblock Dynamic proportional rankings.
\newblock \emph{Social Choice and Welfare}, pages 1--41, 2024.

\bibitem[Kilgour(2010)]{KILG10a}
D.~M. Kilgour.
\newblock Approval balloting for multi-winner elections.
\newblock In \emph{Handbook on approval voting}, pages 105--124. Springer,
  2010.

\bibitem[Kilgour and Marshall(2012)]{KiMa12a}
D.~M. Kilgour and E.~Marshall.
\newblock Approval balloting for fixed-size committees.
\newblock In \emph{Electoral Systems}, Studies in Choice and Welfare,
  chapter~12, pages 305--326. Springer, 2012.

\bibitem[Konczak and Lang(2005)]{KoLa05a}
K.~Konczak and J.~Lang.
\newblock Voting procedures with incomplete preferences.
\newblock In \emph{Multidisciplinary Workshop on Advances in Preference
  Handling}, pages 124--129. 2005.

\bibitem[Lackner and Skowron(2023)]{LaSk23a}
M.~Lackner and P.~Skowron.
\newblock \emph{Multi-Winner Voting with Approval Preferences}.
\newblock Springer Briefs in Intelligent Systems. Springer, 2023.

\bibitem[Skowron et~al.(2017)Skowron, Lackner, Brill, Peters, and
  Elkind]{SLB+17a}
P.~Skowron, M.~Lackner, M.~Brill, D.~Peters, and E.~Elkind.
\newblock Proportional rankings.
\newblock In \emph{Proceedings of the 26th International Joint Conference on
  Artificial Intelligence (IJCAI)}, pages 409--415, 2017.

\bibitem[Terzopoulou et~al.(2021)Terzopoulou, Karpov, and Obraztsova]{TKO21a}
Z.~Terzopoulou, A.~Karpov, and S.~Obraztsova.
\newblock Restricted domains of dichotomous preferences with possibly
  incomplete information.
\newblock In \emph{Proceedings of the 35th AAAI Conference on Artificial
  Intelligence (AAAI)}, pages 5726--5733, 2021.

\bibitem[Vinterbo(2002)]{VINT02a}
S.~A. Vinterbo.
\newblock A note on the hardness of the k-ambiguity problem.
\newblock Technical report, 2002.

\end{thebibliography}


\clearpage
\appendix

Omitted results and proofs are provided in the appendix as follows.

\section{\maxExpSW: Expected Social Welfare Maximization Committee}\label{sec:maxExpSW}

\maxExpSW is the problem of computing a committee that maximizes the expected social welfare ($\E[\SW(W)]$). Formally, $\E[\SW(W)]$ of a committee $W$ is defined as follows. 
\[
\begin{aligned}
	\E[\SW(W)] &= \sum_{\profile \in \profileset} \Delta(\profile) \cdot \SW(W, \profile)\\
	&= \sum_{\profile \in \profileset} \Delta(\profile) \cdot \sum_{i\in V}|W\cap A_i|.
\end{aligned}
\]

\begin{theorem}\label{thm:MaxExpSW-4models}
	For every uncertain preference model, \maxExpSW is solvable in polynomial time.
\end{theorem}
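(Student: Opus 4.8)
The plan is to exploit linearity of expectation to decouple the committee-selection problem into independent per-candidate contributions. The starting observation is that for any committee $W$,
\[
\E[\SW(W)] = \E\Big[\sum_{i \in V}|W \cap A_i|\Big] = \sum_{i \in V}\sum_{c \in W}\Prob[c \in A_i] = \sum_{c \in W}\E[\AS(c)],
\]
where $\E[\AS(c)] = \sum_{i \in V}\Prob[c \in A_i]$ is the expected approval score of candidate $c$. The crucial point is that this quantity depends only on $c$ and not on the rest of the committee, so $\E[\SW(W)]$ is additively separable across the chosen candidates.

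Given this decomposition, I would reduce maximizing $\E[\SW(W)]$ over all size-$k$ committees to selecting the $k$ candidates with the largest expected approval scores. A one-line exchange argument justifies this: any committee that omits a candidate of higher expected score in favour of one with lower expected score can be strictly improved by swapping them, so an optimal committee consists of the top $k$ candidates by $\E[\AS(c)]$. Hence, once the values $\E[\AS(c)]$ are available, the optimal committee is obtained by sorting and picking the top $k$, which is polynomial time.

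It then remains to argue that $\E[\AS(c)] = \sum_{i \in V}\Prob[c \in A_i]$ is computable in polynomial time in each of the four models, which is the only place the models differ. In the candidate probability and 3VA models we have $\Prob[c \in A_i] = p_{i,c}$ directly. In the lottery model, $\Prob[c \in A_i] = \sum_{r : c \in S_r}\Delta_i(S_r)$, a sum over voter $i$'s at most $s_i$ support sets. In the joint probability model, writing $A_i^r$ for voter $i$'s approval set in the profile $\profile_r$, we have $\Prob[c \in A_i] = \sum_{r :\, c \in A_i^r}\lambda_r$, summing the masses of those profiles in which $i$ approves $c$. Each expression is a sum over polynomially many terms, so all expected approval scores, and hence the optimal committee, can be computed in polynomial time.

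I expect no genuine obstacle here; the only subtlety worth stating explicitly is that linearity of expectation requires no independence assumption, so the argument goes through uniformly even for the joint probability model, where approvals across voters and candidates may be arbitrarily correlated. The per-model computation of $\E[\AS(c)]$ is the only step that needs model-specific bookkeeping, and in every case it is an immediate summation over the given representation.
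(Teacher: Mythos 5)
Your proposal is correct and follows essentially the same route as the paper's proof: both use linearity of expectation to write $\E[\SW(W)]=\sum_{c\in W}\E[\AS(c)]$ and then select the top-$k$ candidates by expected approval score, with the only model-specific work being the computation of $\E[\AS(c)]$ from the given representation. Your version is just a more uniform presentation of the same argument the paper carries out model by model.
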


\begin{proof}
	\noindent\textbf{(Joint Probability Model)}  \maxExpSW problem under the joint probability model can be represented as determining the the committee $W^\ast$ such that 
	\[
	\begin{aligned}
		W^\ast &= \underset{W \subseteq C}{\arg\max}~ \E[\SW(W)] \\
		&= \underset{W \subseteq C}{\arg\max} \sum_{\profile \in \profileset}  \Delta(\profile) \cdot \SW(W, \profile)\\
		&= \underset{W \subseteq C}{\arg\max} \sum_{\profile \in \profileset}  \Delta(\profile) \cdot \left( \sum_{c\in W}\AS(c,\profile) \right)\\
		&= \underset{W \subseteq C}{\arg\max} \sum_{c \in W}  \left(\sum_{\profile \in \profileset}  \Delta(\profile) \cdot \AS(c,\profile)\right).
	\end{aligned}
	\]
	To maximize $\sum_{c \in W}  \left(\sum_{\profile \in \profileset}  \Delta(\profile) \cdot \AS(c,\profile)\right)$, we can enumerate all the candidates $c\in C$ and compute the top-$k$ candidates maximizing $\left(\sum_{\profile \in \profileset}  \Delta(\profile) \cdot \AS(c,\profile)\right)$ by iterating over all plausible approval profiles $\profile$ and summing the products of $\Delta(\profile)$ and $\AS(c,\profile)$ in polynomial time.
	
	\textbf{(Lottery Model)} \maxExpSW problem under the lottery model can be written as
	\[
	\begin{aligned}
		W^\ast &= \underset{W \subseteq C}{\arg\max}~ \E[\SW(W)] \\
		&= \underset{W \subseteq C}{\arg\max} \sum_{i\in V}  \sum_{\profile \in \profileset} \Delta(\profile) \cdot |W\cap A_i|\\
		&= \underset{W \subseteq C}{\arg\max} \sum_{i\in V}  \sum_{\profile \in \profileset}\Delta(\profile) \cdot \left(\sum_{c\in W}1\cdot \mathbb{I}[c\in A_i]\right)\\
		&= \underset{W \subseteq C}{\arg\max} \sum_{i\in V} \sum_{c\in W} \left( \sum_{r\in [s_i]}\lambda_r \cdot \mathbb{I}[c \in S_r]\right)\\
		&= \underset{W \subseteq C}{\arg\max} \sum_{c\in W} \left(\sum_{i\in V}\sum_{r\in [s_i]}\lambda_r \cdot \mathbb{I}[c \in S_r]\right).
	\end{aligned}
	\]
	Recall that $s_i$ denotes the set of plausible approval sets for voter $i$. Here, $\mathbb{I}[c \in S_r]$ is an indicator function which returns $1$ if candidate $c$ belongs to the plausible approval set $S_r$ of voter $i$. To maximize $\E[\SW(W)]$, we can enumerate all the candidates $c\in C$ and choose the top-$k$ candidates who maximize $\sum_{i\in V}\sum_{r\in [s_i]}\lambda_r \cdot \mathbb{I}[c \in S_r]$, which is polynomial-time computable. 
	
	\textbf{(Candidate Probability Model)} \maxExpSW problem under the candidate probability model can be formalized as
	\[
	\begin{aligned}
		W^\ast &= \underset{W \subseteq C}{\arg\max}~ \E[\SW(W)] \\
		&= \underset{W \subseteq C}{\arg\max} \sum_{i\in V} \sum_{\profile \in \profileset} \Delta(\profile) \cdot |W\cap A_i|\\
		&= \underset{W \subseteq C}{\arg\max} \sum_{i\in V}  \sum_{\profile \in \profileset}\Delta(\profile) \cdot \left(\sum_{c\in W}1\cdot \mathbb{I}[c\in A_i]\right)\\
		&= \underset{W \subseteq C}{\arg\max} \sum_{i\in V} \sum_{c\in W} 1\cdot \Prob[c\in A_i] \\
		&= \underset{W \subseteq C}{\arg\max} \sum_{c\in W} \sum_{i\in V} p_{i,c}.
	\end{aligned}
	\]
	To maximize $\E[\SW(W)]$ under the candidate probability model, we can iterate every candidate $c\in C$ and sum up the probabilities $p_{i,c}$ for all voters $i\in V$. The committee maximizing the expected social welfare consists of the candidates who rank top-$k$ with the maximum value of $\sum_{i\in V} p_{i,c}$.
\end{proof}

\section{Omitted Results from Section 3}

\subsection{Omitted Example for \Cref{thm:SWDist-lot}}\label{example:SWDist_lot}
We provide the following example showing how to compute the social welfare distribution for a given committee under the lottery model via the dynamic programming in \Cref{dp_algo_sw_w_equal_t_lottery_model}.
\begin{example}[Demonstration of \Cref{dp_algo_sw_w_equal_t_lottery_model} under the Lottery model]
	Consider $V=\{1,2,3\}, C=\{1,2,3\}$, $W=\{2,3\}$ and $\tau=3$. The Lottery preference profile is given as follows:
	\[
	\begin{aligned} 
		&\text{Voter 1}: \{ (0.3, \{1,2\}); (0.5, \{2,3\}); (0.2, \{1,2,3\})\} \\
		&\text{Voter 2}: \{ (0.4, \{1,2\}); (0.6, \{3\})\} \\
		&\text{Voter 3}: \{ (0.5, \{1\}); (0.1, \{1,3\}); (0.4, \{2,3\})\} \\
	\end{aligned}
	\]
	Firstly, we preprocess the computation of the ``contribution" for each voter $1,2,3$. The result is listed in Table \ref{dynamic_programming_preprocessing_lottery_model}.
	\begin{table}[!htbp]
		\caption{Computation of the $f_{n\cdot (k+1)}$ matrix}
		\label{dynamic_programming_preprocessing_lottery_model}
		\centering
		\begin{tabular}{{p{1cm} @{\qquad} p{1cm} @{\qquad} p{1cm} @{\qquad} p{1cm}}}
			\toprule
			$f[i][j]$      &   $0$    &  $1$    & $2$  \\ \midrule
			\text{Voter 1} &   $0.0$  &  $0.3$  & $0.7$  \\ \midrule
			\text{Voter 2} &   $0.0$  &  $1.0$  &  $0.0$  \\ \midrule
			\text{Voter 3} &   $0.5$  &  $0.1$  &  $0.4$  \\ \bottomrule
		\end{tabular}
	\end{table} 
	For each row, we compute the probability that each voter ``contributes" $0, 1, 2$ social welfare under the given $W=\{2,3\}$. For example, for voter $1$, $f[1][1]=0.3$ represents the 
	the probability voter $1$ ``contributes" $1$ to $\SW(W)$ is $0.3$. This is because voter $1$ contributes $1$ to the social welfare of $W=\{2,3\}$ only when her approval set realization is $\{1,2\}$. For the other two possible approval sets $\{2,3\}$ and $\{1,2,3\}$, they both contribute $2$ to $\SW(W)$, thus we have $f[1][2]=0.5 + 0.2 = 0.7$.
	
	After the preprocessing, we do the dynamic programming procedures, starting from initialization with only voter $1$. Table \ref{dynamic_programming_procedures_lottery_model} shows the results.
	\begin{table}[h]
		\caption{Computation of $\DP_{n\cdot (nk+1)}$ matrix}
		\label{dynamic_programming_procedures_lottery_model}
		\centering
		\scalebox{1}{
			\begin{tabular}{p{1cm} @{\qquad} p{.4cm} @{\qquad} p{.4cm} @{\qquad} p{.4cm}@{\qquad}p{.4cm} @{\qquad} p{.4cm} @{\qquad} p{.4cm} @{\qquad} p{.4cm}}
				\toprule
				$\DP[i][j]$   &       $0$   & $1$     & $2$     & $3$    & $4$    & $5$    & $6$   \\ \midrule
				$\{1\}$                   &       $0.0$   &  $0.3$  & $0.7$ & $0.0$ & $0.0$ & $0.0$  & $0.0$ \\ \midrule
				$\{1, 2\}$               &       $0.0$   &  $0.0$  & $0.3$ & $0.7$ & $0.0$ & $0.0$  & $0.0$ \\ \midrule
				$\{1,2,3\}$             &       $0.0$   &  $0.0$  & $0.15$ & $\mathbf{0.38}$ & $0.19$ & $0.28$  & $0.0$ \\ \bottomrule
		\end{tabular}}
	\end{table} 
	We first get $\DP[1][0]=f[1][0]=0,\DP[1][1]=f[1][1]=0.3,\DP[1][2]=0.7$. Next, taking voter $2$ into consideration, $\DP[2][0]=\DP[1][0]\cdot f[2][0]=0$, $\DP[2][1]=\DP[1][0]\cdot f[2][1] + \DP[1][1]\cdot f[2][0]=0\cdot 1 + 0.3\cdot 0=0$. Similarly, we get $\DP[2][2]=0.3$ and $\DP[2][3]=0.7$. For our target $\Prob\left[\SW(W) = 3\right]$, i.e., $\DP[3][3]$. After the computation of $\DP[2][t]$ for $t$ from $0$ to $6$, we compute $\DP[3][3]$ as follows.
	\begin{align*}
		\DP[3][3]
		&=\sum_{r=0}^3 \DP[2][r] \cdot f[3][3-r] \\
		&= 0 \cdot 0 + 0\cdot 0.4 + 0.3 \cdot 0.1 + 0.7 \cdot 0.5 \\
		&= 0.03 + 0.35 = 0.38.
	\end{align*}
	Hence, the probability $\Prob[\SW(W)=3]$ is $0.38$.
\end{example}

\subsection{Omitted Example for \Cref{thm:SWDist-CP}}\label{example:SWDist_CP}
To further illustrate \Cref{dp_algo_sw_w_equal_t_candidate_probability_model}, we present the following example demonstrating how the algorithm solves the \SWDist problem under the Candidate Probability model.
\begin{example}[Demonstration of \Cref{dp_algo_sw_w_equal_t_candidate_probability_model} under the Candidate Probability model]	
	Consider $V = \{1,2\}$, $W = \{1,2\}$, and $\tau = 3$. The Candidate Probability preference profile is represented as follows.
	\[
	\bordermatrix{ & 1 & 2\cr
		1 & $1.0$   & $0.5$ \cr
		2 & $0.6$ & $0.8$ } \qquad
	\]
	
	We first compute $n^1=1$ (voter $1$ certainly approves candidate $1$) and $n^u=3$ ($p_{1,2}= 0.5$, $p_{2,1} = 0.8$, and $p_{2,2} = 0.6$). Then, we re-label these three pairs of uncertain approvals as $(1,2),(2,1),(2,2)$ with probabilities $p_1=0.5,p_2=0.8,p_3=0.6$. To compute $\Prob[\SW(W)=3]=\Prob[\SW(W)-n^1=3-n^1]$, it is to compute the probability of two successful trials out of these three independent Bernoulli trials, i.e., $\DP[3][2]$. We first compute $\DP[1][0]=0.5,\DP[1][1]=0.5$. To compute $\DP[3][2]$, it can be represented as
	\begin{align*}
		\DP[3][2] &= p_3 \cdot \DP[2][1] + (1-p_3)\DP[2][2]\\
		&= p_3 \cdot \Big(p_2\cdot \DP[1][0]+(1-p_2) \cdot \DP[1][1]\Big) \\
		&\quad  + (1-p_3)\cdot \Big(p_2\cdot \DP[1][1] + (1-p_2)\cdot \DP[1][2]\Big)\\
		&= 0.6 \cdot (0.8\cdot 0.5 + 0.2\cdot 0.5) + 0.4 \cdot (0.8\cdot 0.5 + 0.2\cdot 0)\\
		&=0.6 \cdot 0.5 +0.4\cdot 0.4 = 0.46.
	\end{align*}
	
	Therefore, we get the solution that $\Prob[\SW(W)=3]=\DP[3][2]=0.46$.
\end{example}

\section{Omitted Proofs from Section 4}
\subsection{Proof of \Cref{SWM-Prob-Sharp-P-c}}

\begin{proof}
	Given a committee $W$, we define the problem of deciding whether $W$ is possible to be social welfare maximizing under some plausible approval profile as \isPossSWM. Consider the threshold $\varepsilon < \prod_{i\in [n]}\min_{r\in [s_i]}\{\lambda_r\}$ and the problem \textsc{SWM-Prob}$(W, \varepsilon)$. We first show that \isPossSWM is a YES instance \textit{if and only if} \textsc{SWM-Prob}$(W, \varepsilon)$ is a YES instance. The ``if" direction is immediate. For the reverse direction, consider any YES instance in \isPossSWM. It follows that $W$ is $\SWM$ under at least one plausible approval profile under the Lottery model. Notice that every plausible approval profile under the Lottery model has at least $\prod_{i\in [n]}\min_{r\in [s_i]}\{\lambda_r\}$ realization probability. Then for the given committee $W$, $\Prob[W \text{ is } \SWM] \geq \prod_{i\in [n]}\min_{r\in [s_i]}\{\lambda_r\}$, implying it is a YES instance in \textsc{SWM-Prob}$(W, \varepsilon)$. 
	
	We next prove that the \isPossSWM problem, i.e., checking whether the given committee is possible to be $\SWM$ is NP-complete, even for $k=1$ and when each agent's approval set is of size at most $3$. 
	
	We reduce from the classic NP-complete problem Exact Cover by 3-Sets (X3C) \citep{GaJo79a}. An X3C instance involves $3q$ elements in the ground set $U$ and a family of sets $S$ consisting of subsets of $U$ of size $3$. A subset $T$ of $S$ is an exact cover if each element of $U$ is contained in exactly one subset in $T$. The question is whether a given instance of X3C admits an exact cover.
	
	We reduce from an instance of X3C to an instance of \isPossSWM as follows.
	Let $C = U \cup \{w\}$ be the set of candidate, $k=1$, and $W=\{w\}$ be the given committee. The set of voters $V$ has $q+1$ voters. The first $q$ voters have $|S|$ different possible approval sets, each equal to one subset in $S$. Hence, none of the first $q$ voters has $w$ in any of their possible approval sets. The voter $q+1$ has only approval set, that being $W=\{w\}$. 
	
	Observe that the social welfare generated by $W$ in any of the plausible approval profiles is exactly $1$. Therefore, $W$ is a possibly $\SWM$ outcome \textit{if and only if} there is a plausible approval profile such that no candidate in $U$ has a welfare contribution of more than $1$ which is equivalent to saying that no candidate in $U$ is approved by more than $1$ voter (as otherwise, picking that candidate for the committee generates social welfare larger than $1$). 
	
	We prove that we have a YES X3C instance if and only if $W$ is a possibly $\SWM$ outcome. 
	
	($\implies$) We have a YES instance of X3C, and hence an exact cover $T$ which must contain exactly $q$ subsets of $S$. We create a plausible approval profile $\profile$ by assigning to each of the first $q$ voters a unique subset in $T$. Voter $q+1$ has approval set $\{w\}$ by construction. Note that each candidate in $C$ is approved by exactly one voter, hence any committee of size $1$ generates social welfare of $1$. Therefore $W$ is a $\SWM$ outcome in $\profile$.
	
	($\impliedby$) $W$ is a possibly $\SWM$ outcome, hence there is a plausible approval profile $\profile$ such that no candidate in $U$ is approved by more than one voter. Let $T$ contain the approval sets of the first $q$ voters. No two subsets in $T$ have an element in common and each has $3$ elements, hence each of the $3q$ elements of $S$ must appear in exactly one subset of $T$ and subsequently $T$ is an exact cover.
\end{proof}

\subsection{Proof of Corollary~\ref{coro:swm_prob_lottery_sharp_p_complete}}
\begin{proof}
	Recall the reduction in the proof of \Cref{SWM-Prob-Sharp-P-c}. There is one-to-one correspondence between plausible approval profiles on which $W$ is $\SWM$ and the X3C solutions. In the constructed instance, since the first $q$ voters each have $S$ approval sets with equal  probabilities while the $(q+1)$-th voter has only one certain approval set,   there are $|S|^q$ plausible approval profiles each  with probability $\frac{1}{|S|^q}$. 
	Consequently, the problem of computing $\Prob[W\text{ is } \SWM]$ for the given committee $W$ can be represented as 
	\begin{align*}
		\Prob[W\text{ is } \SWM]&=\sum_{\profile \in \profileset}\Delta(\profile) \cdot \mathbb{I}(\textsc{IsSWM}(W, \profile)) \\
		&=\frac{1}{|S|^q}\sum_{\profile\in \profileset}\mathbb{I}(\textsc{IsSWM}(W, \profile)).
	\end{align*}
	Thus, computing $\Prob[W\text{ is } \SWM]$ is equivalent to counting the number of plausible approval profiles in which $W$ is $\SWM$, which is further equivalent to counting the number of exact covers by {\sc X3C}. Since the problem of counting the number of exact covers, i.e., {\sc \#X3C}, is well-known to be \#P-complete, we conclude that computing $\Prob[W\text{ is } \SWM]$ under the Lottery model is \#P-complete.
\end{proof}

\subsection{Proof of Lemma~\ref{IsNecSWM_lemma}}
\begin{proof}
	$(\implies)$ Proof by contradiction. Assume that $W$ is a necessarily $\SWM$ outcome but there exists a pair of candidates $(c,c')$ where $c\in C$ and $c^\prime \in C\setminus W$ such that there is an approval profile $\profile$, in which $\AS(c^\prime,\profile) > \AS(c,\profile)$. Consider the committee $W^\prime=W\setminus \{c\} \cup \{c^\prime\}$. $W$ is not $\SWM$ in $\profile$ because $\SW(W^\prime,\profile) = \SW(W,\profile) + (\AS(c^\prime,\profile) - \AS(c,\profile)) > \SW(W,\profile)$, implying that $W$ is not a necessarily $\SWM$ outcome, a contradiction. \\
	$(\impliedby)$ Assume that a given committee $W$ satisfies that for every plausible approval profile $\profile$, $\forall~ c\in W$ and $\forall c^\prime \in C\setminus W$, we have $\AS(c,\profile) \geq \AS(c^\prime,\profile)$. Then for any other committee $W^\prime$, $\SW(W,\profile)=\sum_{c\in W}\AS(c,\profile) \geq \sum_{c^\prime \in W^\prime} \AS(c',\profile) =\SW(W^\prime,\profile)$. Therefore $W$ generates the highest social welfare in every plausible approval profile and hence is a necessarily $\SWM$ outcome.
\end{proof}

\subsection{Proof of \Cref{thm:IsNecSWM-Lot}}
\begin{proof}
	Following lemma~\ref{IsNecSWM_lemma}, to prove that \isNecSWM can be solved in polynomial time for any given committee $W$, it is sufficient to show that 
	we can verify in polynomial time whether for all candidate pairs $(c, c^\prime),c\in W,c^\prime \in C\setminus W$, and all plausible approval profiles $\profile$, it holds that $\AS(c,\profile) \geq \AS(c^\prime,\profile)$.
	To show the latter, 
	we construct a \textit{deterministic} approval profile $\bar{\profile}$ for every candidate pair $(c, c^\prime)$ and demonstrate that the pair $(c, c^\prime)$ satisfies $\AS(c,\profile) \geq \AS(c^\prime,\profile)$ for all plausible profile $\profile$ if and only if $(c, c^\prime)$ satisfies $\AS(c,\bar{\profile}) \geq \AS(c^\prime,\bar{\profile})$ for the constructed approval profile $\bar{\profile}$.
	
	\noindent \textbf{Deterministic profile construction.} Given a committee $W$, consider each pair $(c, c^\prime)$ where $c \in W$ and $c^\prime \in C \setminus W$. For each voter’s plausible approval set $A_i$, there are four possible cases: (1) $c^\prime \in A_i$ and $c \notin A_i$; (2) $c^\prime \in A_i$ and $c \in A_i$; (3) $c^\prime \notin A_i$ and $c \notin A_i$; (4) $c^\prime \notin A_i$ and $c \in A_i$. We construct the deterministic approval profile $\bar{\profile}$ as follows: for each voter $i$, set $\bar{A_i}$ by selecting a plausible approval set in the following priority order: (1) $\succ$ (2) $\succ$ (3) $\succ$ (4). That is, we first check whether there exists a plausible approval set such that $c^\prime$ is in the approval set while $c$ is not. If such an approval set exists, we set it as $\bar{A}_i$ in the deterministic approval profile $\bar{\profile}$; otherwise, we consider cases (2), (3), and (4) in sequence.	
	
	Next, we prove that if a pair $(c, c^\prime)$ satisfies $\AS(c,\bar{\profile}) \geq \AS(c^\prime,\bar{\profile})$ in the constructed approval profile $\bar{\profile}$, then the pair $(c, c^\prime)$ satisfies $\AS(c, \profile) \geq \AS(c^\prime, \profile)$ for all plausible approval profiles $\profile$.
	
	$(\implies)$ The proof is straightforward. If a pair $(c, c^\prime)$ satisfies $\AS(c,\profile) \geq \AS(c^\prime,\profile)$ for all plausible approval profiles, then the pair satisfies the condition for the plausible approval profile $\bar{\profile}$.
	
	$(\impliedby)$ Proof by contradiction. Suppose that for a pair of candidates pair $(c, c^\prime)$ we have that $\AS(c,\bar{\profile}) \geq \AS(c^\prime,\bar{\profile})$ for the constructed approval profile $\bar{\profile}$, but there exists a plausible approval profile $\profile^\prime$ such that $\AS(c,\profile') < \AS(c^\prime,\profile')$ in $\profile^\prime$, i.e., $\sum_{i\in V}\mathbb{I}[c\in A_i^\prime] < \sum_{i\in V} \mathbb{I}[c^\prime \in A_i^\prime]$ ($\mathbb{I}[\cdot]$ is an indicator function). For each $A_i^\prime$, there could only be four cases: (1) $c'\in A_i',c \notin A_i'$; (2) $c'\in A_i',c \in A_i'$; (3) $c'\notin A_i',c \notin A_i'$; and (4) $c'\notin A_i',c \in A_i'$. The inequality $\sum_{i\in V}\mathbb{I}[c\in A_i^\prime] < \sum_{i\in V} \mathbb{I}[c^\prime \in A_i^\prime]$ implies that the number of cases of type (1) must be strictly greater than the number of cases of type (4) in $\profile^\prime$ because the effects of cases (2) and (3) are canceled out in the inequality. Recall the construction of the approval profile $\bar{\profile}$ in which case (1) has the highest priority to be chosen into the deterministic approval profile $\bar{\profile}$, i.e., for each voter $i$, we preferentially set $\bar{A}_i$ as the possible approval set including $c^\prime$ while excluding $c$. Therefore, the number of instances of case (1) will still be strictly greater than the number of instances of case (4) in the approval profile $\bar{\profile}$, i.e., $\sum_{i \in V} \mathbb{I}[c \in \bar{A}_i] < \sum_{i \in V} \mathbb{I}[c' \in \bar{A}_i]$. This implies $\AS(c,\bar{\profile}) < \AS(c',\bar{\profile})$, contradicting our assumption that $\AS(c,\bar{\profile}) \geq \AS(c',\bar{\profile})$.
	
	Notice that	there are polynomial many candidate pairs $(c,c')$, and for each pair, constructing the approval profile $\bar{A}$ and verifying whether $\AS(c,\bar{\profile}) \geq \AS(c^\prime,\bar{\profile})$ can be done in polynomial time. Hence \isNecSWM is in P under the lottery model.
\end{proof}

\section{Omitted Proofs from Section 5}
\subsection{Proof of \Cref{th:MaxSWM-Lottery}}
\begin{proof}
	We reduce from the {\sc Min-$r$-Union} (M$r$U) problem \citep{VINT02a}: given a universe set $U$ of $m$ elements, a collection of $q$ sets $\mathcal{S}=\{S_1,\ldots, S_q\}$, $\mathcal{S}\subseteq 2^U, \forall \, i\in [q], S_i \subseteq U$, and two integer $r \leq q$ and $\ell$. The goal is to decide whether there exists a collection $\mathcal{I}$ with $|\mathcal{I}| = r$ such that $|\bigcup_{i\in \mathcal{I}} S_i| \leq \ell$. The mapping is as follows, we construct one sole voter with a lottery profile $\{(\frac{1}{q}, A_i)\}_{i\in [q]}$ where for each $i\in [q]$, $A_i$ is set to $S_i$. The set of candidate $C$ is set to $U$. Next, we prove that we have a yes instance of the M$r$U problem if and only if there exists a $W$ of size $\ell$ satisfying $\Prob \left[W~ \text{is}~ \SWM \right] \geq \frac{r}{q}$.
	
	($\implies$) Given an YES instance of M$r$U problem, then there exists a collection $\mathcal{I}$ with size $r$ and $|\bigcup_{i\in \mathcal{I}} S_i| \leq \ell$. We claim that there exists a committee $W$ such that $\Prob\left[W~ \text{is}~ \SWM \right] \geq \frac{r}{q}$. There are two possible cases. (1). $|\bigcup_{i\in \mathcal{I}}S_i| = \ell$, let $W=\bigcup_{i\in \mathcal{I}}S_i$. Since $W$ covers all the candidates in each profile in $\{A_i=S_i\}_{S_i\in \mathcal{I}}$, $W$ maximizes the social welfare for these realizations. (2). $|\bigcup_{i\in \mathcal{I}}S_i| \leq \ell$, let $W=\left(\bigcup_{i\in \mathcal{I}}S_i\right) \cup \bar{W}$ where $\bar{W}$ is a complement candidate set by choosing arbitrary $k - \ell$ candidates from $C\setminus \left(\bigcup_{i\in \mathcal{I}}S_i\right)$. In this case, $W$ is still social welfare maximizer for profiles in $\{A_i=S_i\}_{i\in \mathcal{I}}$. Totally, there are $q$ lottery profiles with equal probability. Therefore, the probability $\Prob\left[W~ \text{is}~ \SWM \right] \geq \frac{r}{q}$.
	
	($\impliedby$) For the \maxSWM problem, if there exists a $W$ such that $\Prob\left[W~ \text{is}~ \SWM \right]  \geq \frac{r}{q}$, as each approval profile $A_i$ is equally probable, $W$ maximizes social welfare for at least $r$ realized profiles over all $q$ possible profiles. W.l.o.g, choose $r$ approval profiles $ \mathcal{P}=\{A_1,\ldots, A_r\}$ in which $W$ maximizes the social welfare. Since for each $A_i$, $|A_i| \leq k$, then $W$ maximizing social welfare for each $A_i$ implies $W$ contains all the candidates in each approval set $A_i$, i.e., $A_i\subseteq W$. This implies $\bigcup_{A_i\in \mathcal{P}}A_i \subseteq W$. 
	Notably, from $\mathcal{A}$ to $\mathcal{S}$, the mapping from $A_i$ to $S_i$ is one-to-one. Therefore, we find a collection $\mathcal{I}$ such that $|\bigcup_{i \in \mathcal{I}} S_i| \leq k = \ell$.
\end{proof}

\subsection{Proof of Lemma~\ref{lemma:exist_nec_swm_no_arc}}
\begin{proof}
	Consider any arbitrary candidate set $\bar{W} \subseteq C\setminus\{c_i,c_j\}$, $|\bar{W}|=k-1$. Denote $W^i=\bar{W}\cup\{c_i\}$ as any committee including $c_i$ but excluding $c_j$; $W^j=\bar{W}\cup\{c_j\}$ as any committee including $c_j$ but excluding $c_i$. No edge between $c_i$ and $c_j$ implies that there exists some deterministic approval profile $\profile^i$ such that $\AS(c_i,\profile^i) > \AS(c_j,\profile^i)$ while there exists some deterministic approval profile $\profile^j$ such that $\AS(c_j,\profile^j) > \AS(c_i,\profile^j)$. We show that neither $W^i$ nor $W^j$ can be a necessarily $\SWM$ committee. $W^i$ is not $\SWM$ in approval profile $\profile^j$ as $\SW(W^i,\profile^j)=\sum_{c\in \bar{W}}\AS(c,\profile^j) + \AS(c_i,\profile^j) < \sum_{c\in \bar{W}}\AS(c,\profile^j) + \AS(c_j,\profile^j)=\SW(W^j,\profile^j)$ while $W^j$ does not maximize the social welfare in approval profile $\profile^i$ as $\SW(W^j,\profile^i)=\sum_{c\in \bar{W}}\AS(c,\profile^i) + \AS(c_j,\profile^i)$ which is smaller than $\sum_{c\in \bar{W}}\AS(c,\profile^j) + \AS(c_i,\profile^i)=\SW(W^i,\profile^i)$. Note that $\bar{W}$ is chosen arbitrarily. Therefore, 
	the fact that neither $W^i$ nor $W^j$ is a necessarily $\SWM$ committee implies that 
	any committee with $c_i$ but without $c_j$ or with $c_j$ but without $c_i$ can never be a necessarily $\SWM$ committee.
\end{proof}

\subsection{Proof of \Cref{thm:ExistsNecSWM-Lot}}
\begin{proof}
	Based on \Cref{algo:ExistsNecSWM-Lot}, we show that under the lottery model, \existsNecSWM returns YES if and only if \Cref{algo:ExistsNecSWM-Lot} returns YES.
	
	$(\implies)$ We prove by showing that if \Cref{algo:ExistsNecSWM-Lot} returns NO, there is no necessarily $\SWM$ committee. If \Cref{algo:ExistsNecSWM-Lot} computes a committee $W$ and returns NO, then there exists some $c\in W, c^\prime \in C\setminus W$, $(c,c^\prime)\notin E$. Since \Cref{algo:ExistsNecSWM-Lot} selects zero indegree candidates in $G$ iteratively and $c^\prime \in C\setminus W$, $(c^\prime ,c) \notin E$ (otherwise the indegree of $c$ is not $0$). So, $(c,c^\prime)$ constructs a pair of candidates without a domination relation. Now we assume there exists a necessarily $\SWM$ committee $W^\ast$, according to lemma~\ref{lemma:exist_nec_swm_no_arc}, there are two cases for $c$ and $c'$. \\
	\textbf{Case 1}: $c\in W^\ast, c^\prime \in W^\ast$. In this case, there must exist some candidate $c'' \in W$ and $c'' \notin W^\ast$. Since $W^\ast$ is a necessarily $\SWM$ committee, we have $\AS(c^\prime,\profile) \geq \AS(c'',\profile)$ for every plausible approval profile $\profile$ (via lemma~\ref{IsNecSWM_lemma}). On the other hand, according to \Cref{algo:ExistsNecSWM-Lot}, it must be that $(c',c'')\notin E$ (as $c'\notin W$, if $(c',c'')\in E$, the indegree of $c''$ is strictly larger than $0$, and then can not be selected in $W$). Thus, the only possible case is that $c'$ and $c''$ has the same approval score for every plausible approval profile and $c''$ has higher lexicographic order than $c'$ according to the tie-breaking rule. Then we have $(c'',c^\prime) \in E$. Notice that there is no edge between $c$ and $c'$, which means no ties between $c$ and $c'$. So, there is no ties between $c$ and $c''$. Recall that  $W^\ast$ is a necessarily $\SWM$ committee. We have $\AS(c, \profile) \geq \AS(c'',\profile)$ for every plausible approval profile $\profile$ and there is no ties between $c$ and $c''$. Then, $(c,c'')\in E$. However, $(c,c'')\in E$ and $(c'',c^\prime) \in E$ imply that $(c,c^\prime)\in E$ by transitivity, contradicting the condition that there is no edge between $c$ and $c^\prime$.\\
	\textbf{Case 2}: $c \notin W^\ast, c^\prime \notin W^\ast$. Denote $S = W^\ast \setminus W$. According to \Cref{algo:ExistsNecSWM-Lot}, for every $c''\in S$, $(c'',c)\notin E$ because $c$ is selected in $W$ while $c''$ is not. However, as we assume that $W^\ast$ is a necessarily $\SWM$ committee, by lemma~\ref{IsNecSWM_lemma} it must hold that $\AS(c'', \profile) \geq \AS(c, \profile)$ and $\AS(c'',\profile) \geq \AS(c',\profile)$ for every plausible approval profile. With regard to $c$ and $c''$, $(c'',c)\notin E$ implies that the only feasible case is that $\AS(c'', \profile) = \AS(c, \profile)$ for every plausible approval profile and that $c$ has higher priority in the tie-breaking. Then, we have $(c,c'')\in E$. Notice that there is no edge between $c$ and $c'$, implying that there is no tie between $c$ and $c'$. It also implies that there will be no tie-breaking between $c'$ and $c''$, i.e., $(c'',c')\in E$ because $\AS(c'',\profile) \geq \AS(c',\profile)$ for every plausible approval profile. Then we can deduce $(c,c')\in E$ from $(c,c'')\in E$ and $(c'',c')\in E$ by transitivity. This contradicts the assumption that there is no edge between $c$ and $c'$.
	
	$(\impliedby)$ If \Cref{algo:ExistsNecSWM-Lot} returns YES, the committee $W, |W|=k$, is a necessarily $\SWM$ committee according to lemma~\ref{IsNecSWM_lemma}. Therefore, \existsNecSWM returns YES. 
\end{proof}

\subsection{Proof of Proposition~\ref{proposition_candidate_probability_maxSWM_1}}\label{sec:maxswm_cp_n_1}
\begin{proposition}\label{proposition_candidate_probability_maxSWM_1}
	Under the Candidate Probability model, \maxSWM is solvable in polynomial time, when $n=1$. 
\end{proposition}
\begin{proof}
	We show \maxSWM is solvable in polynomial time when $n=1$ by establishing the statement: when $n=1$, the committee $W$ that maximizes the probability of being $\SWM$ corresponds to the top-$k$ candidates with the highest approval probabilities. Let $A_1$ be the approval set obtained by sampling from the candidate probability model.  
	The key observation is that, if a committee $W$ satisfies $\SWM$, then either (1) $W$ is the subset of the approval set $A_1$ of the unique voter $1$ (when the size of the realization of the approval set is larger than $k$) or (2) $A_1$ is a subset of the committee $W$ (when the size of the approval set is smaller than $k$). 
	It follows that the probability of $W$ being $\SWM$ can be represented as follows.
	
	\begin{align}
		\Prob\left[W \text{ is }\SWM \right] 
		&=\Prob\left[W \subset A_1 \text{ and } |A_1| > k \right] + \Prob\left[A_1 \subseteq W \text{ and } |A_1| \leq k \right]\notag \notag \\
		&= \Prob\left[W \subset A_1 \mid |A_1| > k \right] \cdot \Prob\left[|A| > k\right] \notag \\
		& \quad + \Prob\left[A_1 \subseteq W \mid |A_1| \leq k\right] \cdot \Prob\left[|A| \leq k\right] \notag \\
		&= \Prob\left[W \subset A_1\right]\cdot \Prob\left[|A_1| > k \right]+ \Prob\left[A_1 \subseteq W\right]\cdot \Prob\left[|A_1| \leq k\right] \notag \\
		&=   \Big(\prod_{c\in W} p_{1,c} \Big) \cdot \Prob\left[|A_1| > k\right] +   \Big(\prod_{c^\prime \in C \setminus W} (1 - p_{1,c^\prime})\Big) \cdot \Prob\left[|A_1| \leq k\right] . \label{eq:SWMn=1}
	\end{align}
	We claim that \Cref{eq:SWMn=1} is maximized by selecting the top $k$ candidates with the highest approval probabilities. Let $W$ denote the set of these top-$k$ candidates. Suppose, for the sake of contradiction, that there exists another set $W' \neq W$ such that $W'$ maximizes the probability of being a $\SWM$ committee.
	
	Consider any arbitrary candidate $c_1 \in W \setminus W'$ and any arbitrary candidate $c_2 \in W' \setminus W$ such that $p_{1,c_1} \geq p_{1,c_2} \geq 0$. We have $\prod_{c\in W'}p_{1,c}$ and $\prod_{c'\in C\setminus W'} (1-p_{1,c'})$ are non-decreasing by replacing $c_2$ with $c_1$ in $W'$ as $p_{1,c_1} \geq p_{1,c_2} \geq 0$. This contradicts the assumption that $W'$ maximizes the probability of being $\SWM$. Therefore, we conclude that when $n=1$, the committee consisting of the top-$k$ highest approval probabilities ($p_{i,c}$) candidates maximizes the probability of being $\SWM$. This selection can be efficiently implemented via sorting in polynomial time.
\end{proof}

\subsection{Proof of Proposition~\ref{th:MaxSWM-CP_constant_k}}\label{sec:maxswm_cp_constant_k}
\begin{proposition}\label{th:MaxSWM-CP_constant_k}
	Under the Candidate Probability model, for constant $k$, \maxSWM is solvable in polynomial time.
\end{proposition}
\begin{proof}
	When the committee size $k$ is constant, there are at most $O(m^k)$ possible committees. For each committee, we can use binary search to compute its probability of being $\SWM$ in polynomial time via \Cref{thm:MaxSW-Prob-CP-3VA}. This implies that \maxSWM is tractable with constant $k$ since we can compute the probabilities of all $O(m^k)$ committees being $\SWM$.
\end{proof}

\subsection{Proof of \Cref{thm:ExistsNecSWM-CP-3VA}}\label{sec:proof_existsNecSWM_CP}
\begin{proof}
	We first describe a polynomial-time algorithm to decide the \existsNecSWM problem under the Candidate Probability model. The algorithm procedures are as follows.
	\begin{itemize}
		\item Construct a \emph{deterministic} profile $\bar{\profile}=(\bar{A}_1,\cdots, \bar{A}_n)$ with only certain approval ballots, that is, for every voter $i$, $\bar{A}_i=\{c\in C:  p_{i,c}=1\}$.
		\item Compute the $\SWM$ committee $W^\ast$ in $\bar{\profile}$, breaking ties by selecting the committee with the greatest number of approvals with positive probabilities, i.e., $$W^\ast= \underset{W \text{ is } \SWM}{\arg\max}~ \sum_{i\in V}|\{c\in W : p_{i,c} > 0\}|.$$
		\item Return YES if $W^\ast$ is certificate of YES instance for \isNecSWM, otherwise NO.
	\end{itemize}
	
	Next, we prove the following statement: Under the candidate probability model, \existsNecSWM returns YES if and only if $W^\ast$ is a necessarily $\SWM$ committee. 
	
	$(\implies)$ Prove by contrapositive: if $W^\ast$ is not necessarily $\SWM$, then \existsNecSWM always returns NO. 
	Suppose for contradiction that $W^\ast$ fails to satisfy necessarily $\SWM$ but \existsNecSWM returns YES. It means there exists $W^\prime \neq W^\ast$ which is a necessarily $\SWM$ committee. Since $W^\prime$ is necessarily $\SWM$ and $W^\ast$ is $\SWM$ in $\bar{\profile}$, then $W^\prime$ must satisfy $\SW(W^\prime,\bar{\profile}) =\SW(W^\ast,\bar{\profile})$. Additionally, in the construction of $\bar{\profile}$, it only considers certain approval ballots, which implies $W^\ast$ and $W^\prime$ have the same number of certain approvals ($p_{i,c}=1$). Recall the computation of $W^\ast$, we break ties by choosing the committee with the greatest number of approvals with positive probabilities. Now consider the uncertain approvals ($p_{i,c}\in (0,1)$) in $W^\ast$ and $W^\prime$. Let $T$ be the uncertain approvals of $W^\prime$: $T = \{(i,c) : i\in V,c\in W^\prime, p_{i,c}\in (0,1)\}$ and $S$ be the uncertain approvals of $W^\ast$, i.e., $S=\{(i,c) : i\in V,c\in W^\ast, p_{i,c}\in (0,1)\}$. Then $|S|\geq |T|$. Now focus on another \textit{deterministic} approval profile $\profile^\prime$ where for each voter $i$, $A_i^\prime=\{c : c\in C, p_{i,c}=1 \text{ or } c\in W^\ast, p_{i,c} > 0 \}$, if $|S| \geq |T| > 0$, then $\SW(W^\ast, \profile^\prime) > \SW(W^\prime, \profile^\prime)$ because the realization of profile $\profile^\prime$ only converts uncertain approvals w.r.t. $W^\ast$ into certain approvals. This contradicts to $W^\prime$ is a necessarily $\SWM$ committee. The other case is $|S| = |T|= 0$, meaning $W^\prime$ and $W^\ast$ only have certain approvals. This implies for any plausible approval profile $\profile$, $\SW(W^\prime, \profile)=\SW(W^\ast,\profile)$, contradicting to the assumption that $W^\prime$ is necessarily $\SWM$ while $W^\ast$ is not.
	
	$(\impliedby)$ If $W^\ast$ is necessarily $\SWM$, then there indeed exists a committee which is necessarily $\SWM$. Therefore \existsNecSWM returns YES.
\end{proof}

\end{document}